\documentclass[
  aps,
  prx,
  reprint,
  superscriptaddress,
  longbibliography,
  floatfix,nofootinbib
]{revtex4-2}

\usepackage{amsmath,amssymb,amsthm,mathtools}
\usepackage{bm}
\usepackage{booktabs}
\usepackage{enumitem}
\usepackage{float}
\usepackage{microtype}
\usepackage{xcolor}
\usepackage{tikz}
\usepackage[normalem]{ulem}
\usetikzlibrary{arrows.meta,positioning,fit,shapes.misc}
\usepackage{xurl}
\usepackage[
  breaklinks=true,
  colorlinks=true,
  linkcolor=blue,
  citecolor=blue,
  urlcolor=magenta
]{hyperref}

\newtheorem{theorem}{Theorem}[section]
\newtheorem{lemma}[theorem]{Lemma}

\newtheorem{corollary}[theorem]{Corollary}
\theoremstyle{definition}
\newtheorem{definition}[theorem]{Definition}
\newtheorem{problem}{Problem}
\theoremstyle{remark}
\newtheorem{remark}[theorem]{Remark}

\newcommand{\C}{\mathbb{C}}
\newcommand{\U}{\mathrm{U}}
\newcommand{\ii}{\mathrm{i}}
\newcommand{\eps}{\epsilon}
\newcommand{\ket}[1]{\lvert #1\rangle}
\newcommand{\bra}[1]{\langle #1\rvert}
\newcommand{\proj}[1]{\lvert #1\rangle\!\langle #1\rvert}
\newcommand{\norm}[1]{\left\lVert #1\right\rVert}
\newcommand{\SELECT}{\operatorname{SELECT}}

\DeclareMathOperator{\diag}{diag}

\newcommand{\B}{\mathsf{B}}
\newcommand{\X}{\mathsf{X}}
\newcommand{\Y}{\mathsf{Y}}
\newcommand{\f}{\mathsf{f}}
\newcommand{\A}{\mathsf{A}}

\begin{document}

\title{Quantum Circuits for General Unitaries: Improved \(T\)-Count via Block Flattening and Dilation}

\author{Pei Yuan}
\email{peiyuan0104@gmail.com}

\author{Shengyu Zhang}
\email{shengyuzhang@gmail.com}

\author{Wei Zi}
\email[Quantum Science Center of Guangdong-Hong Kong-Macao Greater Bay Area, Shenzhen, China. Corresponding author: ]{ziwei.quantum@outlook.com}

\begin{abstract}
Synthesizing arbitrary $n$-qubit unitaries using as few non-Clifford gates as possible is a central problem in fault-tolerant quantum compilation. We present a Clifford+\(T\) circuit construction that approximates any effectively specified unitary to error at most \(\eps\) and whose worst-case \(T\)-count has leading exponential dependence \(2^{5n/4}\) whenever \(\log(1/\eps)=\operatorname{poly}(n)\).  This improves upon the best previous $2^{4n/3}$ scaling.  The key innovation lies in treating the target unitary as a single block-encoded object rather than a long product of simpler operations.
A technique of \textit{block flattening} controls the normalization while preserving an efficient implementation of the block encoding; subsequently, quantum singular value transformation maps its common singular value to one, thereby
recovering the target unitary.
\end{abstract}

\maketitle

\section{Introduction}
\label{sec:introduction}
\label{sec:guide}

An arbitrary operation on a closed quantum system is represented by a
unitary transformation. Compiling such a transformation into elementary
operations is a fundamental problem in quantum computation. In the
standard circuit model with continuously parameterized one- and
two-qubit gates, the number of gates required to synthesize a generic
unitary scales quadratically with the dimension of the underlying
Hilbert space, matching its number of degrees of freedom up to constant
factors~\cite{ShendeBullockMarkov2006}. However, a more relevant cost measure for a fault-tolerant architecture is the number of non-Clifford gates, which are substantially more expensive than
Clifford gates.  In the standard Clifford+\(T\) model, clean ancillary workspace
can also be traded for a lower \(T\)-count~\cite{LowKliuchnikovSchaeffer2024}. This brings the following question: How many \(T\) gates are required to
approximately implement an arbitrary classically specified unitary?

General unitary and isometry synthesis has been studied under a wide
range of measures, including gate count, circuit depth, and ancillary space 
~\cite{BarencoEtAl1995,MottonenEtAl2004,ItenEtAl2016,
SunTianYangYuanZhang2023,YuanZhang2023}.  The closely related problem of
quantum-state preparation, including its controlled and sparse variants,
has developed in parallel
~\cite{MottonenEtAl2005,PleschBrukner2011,ZhangLiYuan2022,
ZhangYuan2024,MaoTianSun2024,LiLuo2025,LuoLi2026,LOW+26}.
Complementary work addresses diagonal-unitary synthesis and exact or
approximate compilation over Clifford+\(T\)
~\cite{KliuchnikovMaslovMosca2013Approx,GilesSelinger2013,
RossSelinger2016,ZhangWuLi2023,ZhangHuangLi2024,LiEtAl2023Fast}.
In this paper, we focus on minimizing \(T\)-count for a
full, classically specified unitary.

Several recent works study the relevant fault-tolerant landscape.
Low, Kliuchnikov, and Schaeffer developed space--\(T\) tradeoffs for
state preparation and unitary synthesis based on data lookup and
Householder reflections~\cite{LowKliuchnikovSchaeffer2024}.  Gosset, Kothari, and Wu determined the optimal \(T\)-count for arbitrary state
preparation and diagonal-unitary synthesis and proved an \(\Omega(2^n)\)
lower bound at constant accuracy for a worst-case $2^n$-dimensional full unitary, even in
a broader adaptive, diamond-distance model
~\cite{GossetKothariWu2026}.  Rosenthal showed that an arbitrary unitary can be implemented with
\(O(2^{n/2})\) queries to an oracle that prepares its columns.  Using
a related construction in the standard one- and two-qubit gate model,
he also obtained an exact circuit of depth \(\widetilde O(2^{n/2})\)
using \(\widetilde O(4^n)\) ancillas~\cite{Rosenthal2026}.  These
oracle and depth statements do not directly give comparable compiled
\(T\)-counts. The best previous worst-case \(T\)-count upper bound is achieved by Tan's recursive cosine--sine construction~\cite{Tan2025}.  

Throughout this paper, we consider \(n\)-qubit unitary synthesis with approximation error \(0<\eps<1/2\).  All logarithms are base two,
and \(\widetilde O(\cdot)\) suppresses factors polylogarithmic in \(2^n\)
and \(1/\eps\).  Tan's result gives $T$-count
\begin{equation}
 O\!\left(2^{4n/3}(n+\log(1/\eps))^{2/3}+2^n(n+\log(1/\eps))\right).
  \label{eq:intro-tan-bound}
\end{equation}
Recent work by Fang, Heunen, and Wang gives an instance-dependent bound
in terms of phase-optimized Frobenius distance to the Clifford group
~\cite{FangHeunenWang2026}.  At fixed accuracy, that instance-dependent
branch specializes in the worst case to \(\widetilde O(2^{3n/2})\) and
therefore does not by itself improve Tan's generic worst-case exponent.

We call \(U\) \textit{effectively specified} if its entries can be approximated
to any precision with certified error bounds.  The associated
classical pre-processing time is not included in the resource count. Our main result improves the worst-case exponent in the broad-accuracy regime \(n+\log(1/\eps)\le 2^n\).

\begin{theorem}[Unitary synthesis]
\label{thm:main}
For every effectively specified \(U\in\U(2^n)\), there exists a
Clifford+\(T\) circuit that implements \(U\) with error at most
\(\eps\).  If \(n+\log(1/\eps)\le 2^n\), the circuit has \(T\)-count
\begin{equation}
  O\!\left(n2^{5n/4}(n+\log(1/\eps))^{5/8}\right)
  \label{eq:main-T}
\end{equation}
and uses \(O(2^n\sqrt{n+\log(1/\eps)})\) clean ancillas. 
\end{theorem}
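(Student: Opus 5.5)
The plan follows the strategy announced in the abstract: build one block encoding of the whole target and then amplify it, rather than factor \(U\) into many simpler pieces. Set \(b:=n+\log(1/\eps)\). I would construct a block encoding \(W\) of \(U/\alpha\) with a common normalization \(\alpha\). Because \(U\) is unitary, every singular value of \(U/\alpha\) equals \(1/\alpha\). Quantum singular value transformation with an odd polynomial approximating the map \(x\mapsto 1\) near \(x=1/\alpha\), equivalently robust oblivious amplitude amplification, then recovers \(U\) using \(O(\alpha\log(1/\eps))\) calls to \(W\) and \(W^\dagger\). If the amplification factor is an exactly known odd integer, one can take \(O(\alpha)\) calls. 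The total \(T\)-count is therefore about \(\alpha\cdot C_W\), where \(C_W\) is the \(T\)-cost of one call to \(W\) at precision \(\eps/\alpha\). Everything reduces to making the product \(\alpha C_W\) small.

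\emph{Block encoding via blocks.} Fix a split \(n=m+(n-m)\) and view \(U\) as a \(2^{n-m}\times 2^{n-m}\) array of \(2^m\times 2^m\) blocks \(B_{ij}\). A naive LCU over block positions gives normalization about \(2^{n-m}\), which is too large. This is where \emph{block flattening} enters. Before and after encoding, I would conjugate by cheap Clifford-level unitaries \(H_L,H_R\), for example Hadamard or Fourier transforms on the block-index register combined with a pseudorandom diagonal phase. The aim is that every block of \(H_LUH_R\) has operator norm \(O(2^{-(n-m)/2})\) up to a polylogarithmic factor. A block-row SELECT over the \(2^{n-m}\) block indices then has normalization \(\alpha=\widetilde O(2^{(n-m)/2})\). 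In that SELECT, each block is loaded as a \(2^m\)-dimensional operator through a Householder/column-state-preparation routine driven by a data-lookup (QROM) oracle. Undoing \(H_L,H_R\) is free in \(T\)-count.

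The flattening must be made rigorous. Either I would show that a structured family, such as Hadamard composed with diagonal sign patterns, flattens every unitary deterministically, or I would use a concentration bound (matrix Chernoff) together with the allowed classical preprocessing to select a good instance. I expect this step, keeping \(\alpha\) at the square-root scale while the implementation stays cheap, to be the main obstacle.

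\emph{Cost of one call and balancing.} One call to \(W\) must load \(4^n\) entries at \(b\) bits each, organized into \(2^{n-m}\) block rows. With a space--\(T\) tradeoff lookup using \(\lambda\) clean ancillas, the cost per call is about \(4^n b/\lambda+\lambda\). Amortizing the lookup over the \(2^m\) columns of a block instead gives a cost of roughly \(2^{n-m}\cdot 2^m\sqrt{2^m b}\) via state preparation with \(\sqrt{\cdot}\) tradeoffs. The exponents are fixed by minimizing \(\alpha C_W\) over \(m\) and \(\lambda\). My target choice is \(\alpha\approx 2^{n/4}b^{-1/8}\), corresponding to \(m\approx n/2\) up to a \(\log b\) correction, and \(\lambda\approx 2^n\sqrt b\). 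With these, \(C_W=\widetilde O(2^n b^{3/4})\) and the product is \(O(n2^{5n/4}b^{5/8})\). The ancilla count matches the claimed \(O(2^n\sqrt b)\), and the condition \(b\le 2^n\) ensures that the lookup term dominates the additive overhead.

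\emph{Error budget.} Finally I would split \(\eps\) among three sources: rotation synthesis and lookup truncation at \(O(b)\) bits, the QSVT polynomial error, and the flattening failure term. The error of \(W\) is amplified at most \(O(\alpha)\)-fold, which is absorbed into the choice \(b=n+\log(1/\eps)\) up to constants.
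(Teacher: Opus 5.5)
\textbf{Gap: the per-call cost and the tradeoff that produces the exponent \(5/4\) are missing.} Your outline matches the paper's architecture: sign-diagonal/Walsh flattening, one \(\SELECT\) over normalized blocks with normalization \(\alpha=\widetilde O(2^{(n-m)/2})\), then amplification with \(O(\alpha)\) calls. What is missing is how the cost \(C_W\) of one call depends on the block size, and the values you quote do not follow from either cost model you write down. Let \(L:=n+\log(1/\eps)\) (your \(b\)) and let \(2^m\) be the block size.

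\emph{First model.} The cost \(4^nL/\lambda+\lambda\) is the cost of \emph{reading} a table, not of applying \(4^{n-m}\) distinct \(2^{m+1}\)-dimensional dilations under control. It is independent of \(m\). Taken literally, it would let you choose \(m=n\), \(\alpha=O(1)\), and synthesize any unitary with \(\widetilde O(2^n\sqrt L)\) \(T\) gates, so it cannot be a valid model. Even on its own terms, with your \(\lambda\approx2^n\sqrt L\) it gives \(2^n\sqrt L\), not \(2^nL^{3/4}\).

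\emph{Second model.} Householder reflections driven by QROM state preparation give \(C_W\approx2^n2^{m/2}\sqrt L\). Hence \(\alpha C_W\approx2^{(n-m)/2}\cdot2^n2^{m/2}\sqrt L=2^{3n/2}\sqrt L\) for every \(m\). The \(\sqrt{2^m}\) growth of the per-call cost exactly cancels the normalization gain, so this route gives no improvement at all. The pair \(C_W=\widetilde O(2^nL^{3/4})\), \(\alpha\approx2^{n/4}L^{-1/8}\) is chosen to hit the target product rather than derived.

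\textbf{The missing idea} is a compiler for the \(\SELECT\) whose leading term does not grow with the block size. In the paper, \(\SELECT=\sum_{I,J}\proj{I}_\X\otimes\proj{J}_\B\otimes\operatorname{Hal}(C_{IJ})\), with \(C_{IJ}=V_{IJ}/g\) and \(\operatorname{Hal}\) the Julia--Halmos dilation. This is a \((2r,k+1)\)-uniformly controlled unitary on \(2r+k+1\) qubits. Tan's compiler (Lemma~\ref{lem:tan-controlled}) therefore implements it with \(O(2^{r+k+1}\sqrt L+4^{k+1}L)=O(d\sqrt L+b^2L)\) \(T\) gates, where \(b=2^k\). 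Preceding it by \(H_D\) on the ancilla \(\X\), and following it by \(H_D\) on the block label \(\B\) and \(\operatorname{SWAP}(\X,\B)\), recombines all \(D^2\) blocks into \(V/(Dg)\).

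The total is then \(\sqrt{d/b}\log d\,(d\sqrt L+b^2L)\). This balances at \(b=d^{1/2}L^{-1/4}\), i.e.\ blocks \emph{smaller} than \(\sqrt d\), giving \(\alpha\approx d^{1/4}L^{1/8}\log d\) and \(C_W\approx d\sqrt L\). Your split, with blocks of size \(2^{n/2}L^{1/4}\), would make the \(4^{k+1}L\) term \(\approx dL^{3/2}\) and the total \(\approx d^{5/4}L^{11/8}\).

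\textbf{Two smaller points.}
\begin{enumerate}
\item The random \(\pm1\) diagonals are Boolean phase oracles, not Clifford. They cost \(O(2^{n/2})\) \(T\) gates: lower order, but not free.
\item You must commit to an exact odd polynomial with \(P(1/\rho)=1\) of degree \(\Theta(\rho)\). The paper uses \(\sin(Q\arcsin(\beta x))\) with \(\beta=\rho\sin(\pi/(2Q))\). The \(\eps\)-approximate version carries an extra \(\log(1/\eps)\) factor, which would break the \(L^{5/8}\) dependence.
\end{enumerate}
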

 For \(n+\log(1/\eps)>2^n\), Tan's
construction gives \(O(2^n(n+\log(1/\eps)))\) using \(O(n+\log(1/\eps))\) clean ancillas.

In practice, the accuracy $\epsilon$ is at least inversely exponentially $1/\exp(n)$ large, in which case
\(\log(1/\eps)=\mathrm{poly}(n)\) and hence
our bound for \(T\)-count is
\(\widetilde O(2^{5n/4})\), improving upon Tan's
\(\widetilde O(2^{4n/3})\) bound by reducing the coefficient of \(n\) in the exponent from \(4/3\) to \(5/4\).  Together with the known
\(\widetilde\Omega(2^n)\) lower bound \cite{GossetKothariWu2026}, it leaves a multiplicative gap of \(\widetilde O(2^{n/4})\).

Our construction treats the target unitary as a single block-encoded
object rather than decomposing it into a long product of simpler
unitaries as all previous work did \cite{LowKliuchnikovSchaeffer2024,GossetKothariWu2026,Tan2025}.  Choose a power-of-two block size \(b\) and write \(D=d/b\), where $d=2^n$.
Two Boolean phase oracles and Walsh transforms produce a unitary
\(V\) whose norm distribution over all \(b\)-by-\(b\) blocks is flattened:
\begin{equation}
  \max_{I,J}\norm{V_{IJ}}
  =O\!\left(\sqrt{\frac{b}{d}}\log d\right).
  \label{eq:intro-flattening}
\end{equation}
The two Boolean phase oracles simultaneously ensure the required bound for every block, which is crucial
for encoding the full matrix in a single circuit.

We normalize each block, complete it to a unitary dilation
~\cite{Halmos1950,Robinson2018}, and place all \(D^2\) dilations in a single
jointly compiled \(\SELECT\) operator.  Preparing and later uncomputing the block
labels then gives a block encoding of \(V\) with normalization
\begin{equation}
  \rho
  =O\!\left(\sqrt{\frac{d}{b}}\log d\right).
  \label{eq:intro-normalization}
\end{equation}
This normalization grows only as \(\sqrt{d/b}\), while each selected
dilation acts on a space of dimension \(2b\).  Tan's controlled-unitary
compiler therefore implements one use of this block encoding with
\(O(d\sqrt L+b^2L)\) \(T\) gates~\cite{Tan2025}, where \(L:=n+\log(1/\eps)\).

Since the encoded block is \(V/\rho\) for a unitary \(V\), all of its
singular values are equal to \(1/\rho\).  We apply a degree-\(O(\rho)\)
quantum singular value transformation that maps \(1/\rho\) exactly to
one, thereby recovering \(V\) using \(O(\rho)\) applications of the
block encoding~\cite{Gilyen2019}.  The resulting \(T\)-count is
\begin{equation}
  O\!\left(
    \sqrt{\frac{d}{b}}\log d\,
    [d\sqrt L+b^2L]
  \right).
  \label{eq:intro-tradeoff}
\end{equation}
After amplification, the first contribution decreases with \(b\),
whereas the second increases.  Balancing them at
\(b=\Theta(d^{1/2}L^{-1/4})\) gives the exponent \(5/4\).  The Boolean phase oracles and
Walsh transforms are then undone at a lower-order cost.

Randomized Walsh transforms, unitary dilation, block encoding, and
quantum singular value transformation are all well-established
techniques~\cite{TsengEtAl2024,Halmos1950,Robinson2018,
CladerEtAl2022,Gilyen2019}.  The central new
construction here is the block-dilation \(\SELECT\): after simultaneous
spectral-norm flattening, it organizes all normalized blocks of an
arbitrary unitary into a single \(\SELECT\) and recombines them into a
block proportional to the entire target.  This construction
simultaneously controls the block-encoding normalization and the
dimension on which each selected unitary acts.  Tan's controlled-unitary
compiler makes the resulting \(\SELECT\) inexpensive to implement, while
the balance between its compilation cost and the amplification overhead
is what yields the exponent \(5/4\). The reduction in \(T\)-count uses \(O(d\sqrt L)\) clean ancillas.
The flattening signs can be found by randomized classical
preprocessing: for the chosen block size, a random pair succeeds with constant
probability, and candidate pairs can be checked by evaluating the
corresponding block norms with certified numerical bounds.  This
classical preprocessing is not included in the quantum resource count.


The remainder of the paper is organized as follows. Section~\ref{sec:pre} introduces the notation, implementation model, and preliminary results used throughout the paper. Section~\ref{sec:flattening} establishes the simultaneous block-flattening lemma for arbitrary unitary matrices. Section~\ref{sec:block-encoding} then constructs a block encoding of the resulting block-flattened unitary. In Section~\ref{sec:amplification}, we prove the \(T\)-count and ancilla-count bounds for general unitary synthesis. Section~\ref{sec:multiplexed} extends the circuit framework to uniformly controlled unitaries. Section~\ref{sec:conclusion} concludes with a discussion of the remaining gap between the upper and lower bounds.

\section{Preliminaries}
\label{sec:pre}

In this section, we introduce the notation and preliminary lemmas that will be utilized in the rest of the paper.

\paragraph*{Notation.}
For a linear map \(A:\mathcal H_1\to\mathcal H_2\) between
finite-dimensional Hilbert spaces, we denote its operator norm by
\(
  \norm{A}:=\max_{\norm{x}_2=1}\norm{Ax}_2.
\)
 A Rademacher random variable $\zeta$ is a random variable such that $\Pr\{\zeta=1\}=\Pr\{\zeta=-1\}=1/2$. The $n$-qubit Walsh transform is the unitary $H_{d}:= H^{\otimes n}$, where $H$ is the Hadamard gate and $d=2^n$. For an integer $k\ge 1$ and a register $\X$, let $I_k$ denote the $k$-dimensional identity operator and $I_{\X}$ denote the identity operator on register $\X$.

\begin{definition}
\label{def:clean}
For an integer \(m\ge0\), define \(J_{0}:\C^{2^n}\longrightarrow\C^{2^n}\otimes\C^{2^m}\) to append \(m\) ancillas
\begin{equation}
  J_{0}\ket{\psi}=\ket{\psi}\ket{0^m}.
  \label{eq:clean-embedding-definition}
\end{equation}
For \(U\in\U(2^n)\) and \(\eps\ge0\), an \((n+m)\)-qubit unitary circuit \(\widetilde U\in\U(2^{n+m})\) \textit{implements \(U\) with error \(\eps\)} if
\begin{equation}
  \norm{\widetilde UJ_{0}-J_{0}U}\le\eps.
  \label{eq:clean-error}
\end{equation}
\end{definition}

We refer to this as an \(\eps\)-approximate implementation of \(U\)
using \(m\) clean ancillas. This is precisely the approximation
criterion used by Tan~\cite[Definition~1.2]{Tan2025}, expressed using
the embedding \(J_0\). It controls the action of \(\widetilde U\)
uniformly on the clean-ancilla input subspace, including leakage from
that subspace, but places no restriction on its orthogonal complement.

In this paper, we study general unitary synthesis in the Clifford+\(T\) circuit model. A Clifford+\(T\) circuit consists of Clifford and \(T\) gates, where the Clifford group is generated by the Hadamard, phase, and CNOT gates. The \(T\)-count of a circuit is the total number of \(T\) gates it contains.

\begin{problem}[\(T\)-count optimization for unitary synthesis]\label{pro:us}
Let $U\in \U(2^{n})$ and $\eps\in (0,1)$.
The \(T\)-count optimization problem for unitary synthesis is to find an
\((n+m)\)-qubit Clifford+\(T\) circuit
\(\widetilde U\in\U(2^{n+m})\), for some \(m\ge0\), that
\(\eps\)-approximately implements \(U\) while minimizing the
\(T\)-count.
\end{problem}

The following lemmas will be utilized in our subsequent circuit construction for general unitary synthesis.
\begin{lemma}[Uniformly controlled unitary (UCU),  \cite{Tan2025}]
\label{lem:tan-controlled}
Let \(1\le k<n\). We call $R$ an $(n-k,k)$-uniformly controlled unitary (UCU) if
\begin{equation}
  R=\sum_{x\in\{0,1\}^{n-k}}
      \proj{x}\otimes R_x,
  \qquad R_x\in\U(2^k).
  \label{eq:generic-controlled-unitary}
\end{equation}  
For every \(\eps\in(0,1)\), there exists a Clifford+\(T\) circuit
\(\widetilde R\) that \(\eps\)-approximately implements \(R\).
Its \(T\)-count and ancilla-count are both
\begin{equation}
  O\!\left(
    2^{(n+k)/2}\sqrt{k+\log(1/\eps)}
    +4^k\bigl(k+\log(1/\eps)\bigr)
  \right).
  \label{eq:tan-controlled-cost}
\end{equation}
\end{lemma}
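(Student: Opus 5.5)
We would prove Lemma~\ref{lem:tan-controlled} by separating the \emph{data-dependent} part of \(R\), namely the classical parameters of the \(2^{n-k}\) blocks \(R_x\), from a \emph{data-independent} rotation network that is the same for every \(x\). First we fix a universal decomposition pattern. By a Givens (two-level) elimination, or equivalently a cosine--sine recursion, every \(R_x\in\U(2^k)\) factors as
\[
  R_x=\Delta_x\prod_{j=1}^{M}G_j(\theta_{x,j},\varphi_{x,j}),\qquad M=O(4^k).
\]
Here each \(G_j\) is a rotation acting on a fixed pair of basis states, and the sequence of pairs is the same for every \(x\). The factor \(\Delta_x\) is diagonal and is absorbed into \(O(2^k)\) further phase rotations. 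Hence \(R\) is a product of \(O(4^k)\) \emph{multiplexed} single-parameter rotations, each controlled by the \(n-k\) qubits holding \(x\).

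The circuit then works in three stages. In the first stage, a single table lookup (select--swap QROM) writes every angle of block \(x\) into a clean register, \(\ket{x}\ket{0}\mapsto\ket{x}\ket{\vec\theta_x}\). This register has width \(w=O(4^k b)\) bits, where \(b=O(k+\log(1/\eps))\) is the bit precision per angle. In the second stage, each Givens rotation is applied by a fixed Clifford basis change, which maps the two-level subspace to a single-qubit rotation controlled by the pattern, followed by a phase-gradient addition controlled on the \(b\) bits of the corresponding angle. This costs \(O(b)\) \(T\) gates per rotation, hence \(O(4^k b)\) in total, and it gives the second term of \eqref{eq:tan-controlled-cost}. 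The phase-gradient state and the adders need only \(O(b)\) clean ancillas, and the gradient state is prepared once at cost \(O(b\log(1/\eps))\), which is lower order. In the third stage, the lookup is uncomputed.

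The cost of the lookup is the crux. A select--swap QROM over \(N=2^{n-k}\) entries of width \(w\), using a swap network of size \(\lambda\), has \(T\)-count \(O(N/\lambda+\lambda w)\) and uses \(O(\lambda w)\) ancillas. Choosing \(\lambda=\sqrt{N/w}\), rounded to a power of two and taken to be at least \(1\), gives cost
\[
  O\!\bigl(\sqrt{Nw}\bigr)=O\!\bigl(\sqrt{2^{n-k}\,4^k b}\bigr)=O\!\bigl(2^{(n+k)/2}\sqrt{b}\bigr).
\]
This is the first term of \eqref{eq:tan-controlled-cost}, and the ancilla count is the same. If instead \(N<w\), then \(\lambda=1\) and the cost \(O(N+w)\) is dominated by the \(4^kb\) term. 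Uncomputation is done either by running the lookup in reverse or, more cheaply, by measurement-based unlookup; both stay within the same bound.

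For the error, rounding each of the \(O(4^k)\) angles to \(b\) bits perturbs each rotation by \(O(2^{-b})\) in operator norm. Errors of unitaries add subadditively, so the total error is \(O(4^k2^{-b})\le\eps\) once \(b=2k+\log(1/\eps)+O(1)=O(k+\log(1/\eps))\). The QROM and the adders are exact, and every ancilla is returned to \(\ket{0}\), so the bound \(\norm{\widetilde RJ_0-J_0R}\le\eps\) of Definition~\ref{def:clean} holds.

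We expect the main obstacle to be the requirement that all \(4^k\) angles for a given \(x\) be loaded in \emph{one} wide lookup rather than layer by layer. Performing \(4^k\) separate lookups of width \(b\) would cost \(4^k\sqrt{2^{n-k}b}=2^{(n+3k)/2}\sqrt{b}\). Loading everything at once is also what forces the ancilla count to match the \(T\)-count, and it is why a single \(x\)-independent rotation pattern is essential. We would check carefully that the Givens ordering can be fixed uniformly in \(x\), handling degenerate columns by using exact rotations with zero angle.
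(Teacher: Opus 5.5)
You will not find a proof of this lemma in the paper to compare against: it is imported from Tan's work, so your reconstruction has to stand on its own. Most of it holds up. A single wide select--swap lookup loads all \(O(4^k)\) parameters of \(R_x\), and an \(x\)-independent network then applies them at \(T\)-cost linear in the loaded width \(w=O(4^k(k+\log(1/\eps)))\). The lookup arithmetic \(\sqrt{2^{n-k}w}=O(2^{(n+k)/2}\sqrt{k+\log(1/\eps)})\), the fixed Givens pattern and the rounding budget are all correct.

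\textbf{The gap.} The step that fails is your claim that the phase-gradient catalyst is lower order. By your own estimate it costs \(O(b\log(1/\eps))\) to prepare (more precisely \(O(b\log(b/\eps))\), one synthesized rotation per gradient qubit). You pay the same again to return it to \(\ket{0}\), as Definition~\ref{def:clean} requires. With \(b=\Theta(k+\log(1/\eps))\) this is of order \(\log^2(1/\eps)\) once \(\log(1/\eps)\ge k\). It is dominated by \(2^{(n+k)/2}\sqrt b+4^kb\) only when \(\log(1/\eps)\lesssim\max\{4^k,2^{(n+k)/3}\}\).

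For example, take \(n=2\), \(k=1\) and \(\eps\to0\). The lemma promises \(O(\log(1/\eps))\) \(T\) gates, but your circuit spends order \(\log^2(1/\eps)\) on the catalyst alone. The lemma is asserted for every \(\eps\in(0,1)\), so this is a real gap.

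The bad range also matters for this paper. Lemma~\ref{lem:compiled-base} invokes the lemma with \(\log(1/\delta)=\Theta(L)\), a per-call budget of \(\Theta(d\sqrt L)\) at the chosen block size, and \(L\) allowed up to \(d\). A per-call \(L^2\) term would therefore dominate once \(L>d^{2/3}\).

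\textbf{A possible fix.} One way to close the gap is to remove the catalyst entirely by loading gate words instead of angle bits. The steps are as follows.
\begin{enumerate}
\item Approximate each of the \(O(4^k)\) single-qubit rotations in your fixed pattern, including those for \(\Delta_x\), to accuracy \(\eps/O(4^k)\) by an exact Clifford+\(T\) word in Matsumoto--Amano normal form. By Lemma~\ref{lem:Rz-Tcount} each word has length \(O(k+\log(1/\eps))\).
\item Pad all words to a common length with an identity syllable, and let the lookup write the syllable codes for branch \(x\).
\item Apply a fixed network in which each syllable (\(HT\), \(SHT\) or identity, plus a final Clifford from a constant-size set) is controlled on its code bits and on the pattern-AND qubit of the current two-level rotation. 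Each syllable costs \(O(1)\) \(T\) gates.
\end{enumerate}
This network costs \(O(4^k(k+\log(1/\eps)))\) with \(O(k)\) workspace, and it needs no state synthesized to accuracy \(\eps\). The loaded width is still \(O(4^k(k+\log(1/\eps)))\). Your select--swap analysis, error accounting and ancilla count therefore carry over unchanged and give exactly the stated bound.

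\textbf{A smaller point.} Measurement-based unlookup is not a unitary circuit in the sense of Definition~\ref{def:clean}. Use the reversed lookup, which you also offer.
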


\begin{lemma}[Boolean phase oracle, \cite{GossetKothariWu2026}]
\label{lem:boolean-phase}
An $n$-qubit Boolean phase oracle $D\in\U(2^n)$ is a diagonal unitary whose diagonal entries are all $\pm 1$.
Every such unitary can be implemented exactly by a Clifford+\(T\)
circuit using \(O(2^{n/2})\) \(T\) gates and \(O(2^{n/2})\) clean
ancillas.
\end{lemma}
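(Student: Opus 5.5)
The plan is to reduce the phase oracle to two table lookups over halves of the input, each handled by unary iteration. Write the diagonal as \(D=\sum_{x}(-1)^{f(x)}\proj{x}\) for a Boolean function \(f:\{0,1\}^n\to\{0,1\}\). Split the input as \(x=(y,z)\), where \(y\) is the first \(\lceil n/2\rceil\) bits and \(z\) is the remaining \(\lfloor n/2\rfloor\) bits. For each \(y\), let \(t_y\in\{0,1\}^{2^{\lfloor n/2\rfloor}}\) be the truth-table row \(t_y[z]=f(y,z)\). The circuit has three stages: load \(t_y\) into a data register, kick back the phase selected by \(z\), and unload \(t_y\).

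\emph{Loading.} Use a table-lookup (QROM) circuit built from unary iteration over \(y\) to map
\[
\ket{y}\ket{0^{2^{\lfloor n/2\rfloor}}}\mapsto\ket{y}\ket{t_y}.
\]
Unary iteration walks a binary tree of Toffoli/AND gates over the \(2^{\lceil n/2\rceil}\) values of \(y\). At each leaf, a single control qubit flags \(y\), and the classical word \(t_y\) is written with CNOTs fanned out from that flag. These CNOTs are Clifford gates, so the \(T\)-cost is \(O(2^{\lceil n/2\rceil})\), with \(O(n)\) ancillas for the iteration plus \(2^{\lfloor n/2\rfloor}\) data qubits. I will use only the exact \(T\)-based Toffoli decomposition, so no approximation enters.

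\emph{Phase kickback.} Perform a second unary iteration, this time over \(z\). At the leaf for value \(z\), apply a controlled-\(Z\) between the leaf flag and data qubit \(z\). This applies the phase \((-1)^{t_y[z]}=(-1)^{f(y,z)}\) and costs \(O(2^{\lfloor n/2\rfloor})\) \(T\) gates. The data register is in a computational basis state for each branch \(\ket{y,z}\), and the phase is diagonal in that basis, so this stage produces exactly \((-1)^{f(x)}\ket{x}\ket{t_y}\) with the ancillas unchanged.

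\emph{Unloading and accounting.} Run the loading circuit in reverse, which returns the data register to \(\ket{0\cdots0}\). Every ancilla is then clean again, and the overall map is exactly \(D\otimes I\) on the clean-ancilla subspace, so the error in \eqref{eq:clean-error} is zero. The total cost is \(O(2^{\lceil n/2\rceil})=O(2^{n/2})\) \(T\) gates and \(O(2^{n/2}+n)=O(2^{n/2})\) clean ancillas.

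The main obstacle is making sure unary iteration really achieves \(T\)-cost linear in the number of leaves, rather than \(n\) times that, while staying exact. I would follow the standard construction: compute-uncompute AND pairs, where each AND costs 4 \(T\) gates and its uncomputation is exact via the Toffoli-inverse decomposition, and share prefixes along the tree so that each internal node costs \(O(1)\) ANDs. If measurement-based uncomputation were used, the circuit would no longer be unitary Clifford+\(T\), so I would avoid it in favor of exact reversal, which only costs a constant factor.
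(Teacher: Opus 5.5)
Your proof is correct. The paper does not prove this lemma; it imports it from Gosset, Kothari, and Wu~\cite{GossetKothariWu2026}. Your construction is therefore a self-contained replacement for that citation, not an alternative to an argument in the text.

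The argument holds together as written. You split \(x=(y,z)\) and load the truth-table row \(t_y\) with a unary-iteration lookup whose leaves fan out Clifford CNOTs. A second unary iteration over \(z\) then applies CZ to data qubit \(z\), and exact unloading follows. Together these give \(D\otimes I\) on the clean-ancilla subspace. The cost is \(O(2^{\lceil n/2\rceil})=O(2^{n/2})\) \(T\) gates and \(O(2^{\lfloor n/2\rfloor}+n)\) clean ancillas. You also correctly flag that the prefix-sharing tree is what keeps unary iteration linear in the number of leaves.

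Your version makes explicit the two properties the paper relies on in Theorem~\ref{thm:main-restate} and Section~\ref{sec:multiplexed}, which the citation leaves implicit:
\begin{enumerate}
\item The circuit is a genuine unitary Clifford+\(T\) circuit. Uncomputing the AND gates unitarily rather than by measurement keeps it inside the model of Definition~\ref{def:clean}, at only a constant-factor cost.
\item Every ancilla returns exactly to \(\ket{0}\), so \(D_1,D_2\) contribute zero error outside the amplification sequence.
\end{enumerate}

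An equally short symmetric variant gives the same bounds. Compute one-hot encodings of both \(y\) and \(z\) by unary iteration, and realize \((-1)^{f(y,z)}\) as a pure CZ network between the two one-hot registers. This places all non-Clifford work in the two encoders.
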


\begin{lemma}[Rectangular matrix Rademacher series, \cite{Tropp2012}]\label{lem:rademacher}
    Consider a finite sequence $\{B_\ell\}$ of fixed matrices in $\C^{d_1\times d_2}$, and let $\{\gamma_\ell\}$ be a finite sequence of independent Rademacher random variables. Define the variance parameter
    \begin{equation}
        \sigma^2:= \max\left\{\norm{\sum_{\ell} B_\ell B_\ell^\dagger},\norm{\sum_{\ell} B^\dagger_\ell B_\ell}\right\}.
    \end{equation}
    Then, for every \(t\ge0\),
    \begin{equation}
        \Pr\left\{\norm{\sum_{\ell}\gamma_\ell B_\ell}\ge t\right\}\le (d_1+d_2)\cdot \exp\left(-\frac{t^2}{2\sigma^2}\right).
    \end{equation}
\end{lemma}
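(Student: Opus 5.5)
Since this is a cited result, I would reproduce Tropp's argument in two stages. First I would prove a Hermitian Rademacher bound for the largest eigenvalue. Then I would transfer it to rectangular matrices via the Hermitian dilation.

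\textbf{Dilation.} For \(B\in\C^{d_1\times d_2}\) set
\begin{equation}
  \mathcal S(B)=\begin{pmatrix}0&B\\ B^\dagger&0\end{pmatrix},
\end{equation}
a Hermitian matrix of dimension \(d:=d_1+d_2\). Its spectrum is \(\{\pm s_i(B)\}\) together with zeros, so
\begin{equation}
  \lambda_{\max}(\mathcal S(B))=\norm{B}.
\end{equation}
The map \(B\mapsto\mathcal S(B)\) is real-linear, and \(\mathcal S(B)^2=\diag(BB^\dagger,B^\dagger B)\). Writing \(A_\ell:=\mathcal S(B_\ell)\), this gives
\begin{equation}
  \norm{\sum_\ell A_\ell^2}
  =\max\Bigl\{\norm{\sum_\ell B_\ell B_\ell^\dagger},\norm{\sum_\ell B_\ell^\dagger B_\ell}\Bigr\}
  =\sigma^2 .
\end{equation}
Also \(\norm{\sum_\ell\gamma_\ell B_\ell}=\lambda_{\max}(\sum_\ell\gamma_\ell A_\ell)\). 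It therefore suffices to show, for Hermitian \(d\times d\) matrices \(A_\ell\), that
\begin{equation}
  \Pr\Bigl\{\lambda_{\max}\bigl(\textstyle\sum_\ell\gamma_\ell A_\ell\bigr)\ge t\Bigr\}\le d\,e^{-t^2/(2\sigma^2)}
  \quad\text{with }\sigma^2=\norm{\textstyle\sum_\ell A_\ell^2}.
\end{equation}

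\textbf{Hermitian bound.} Let \(Y=\sum_\ell\gamma_\ell A_\ell\). I would use the matrix Laplace transform method: for every \(\theta>0\),
\begin{equation}
  \Pr\{\lambda_{\max}(Y)\ge t\}\le e^{-\theta t}\,\mathbb E\operatorname{tr}e^{\theta Y}.
\end{equation}
This follows from Markov's inequality and \(e^{\theta\lambda_{\max}(Y)}\le\operatorname{tr}e^{\theta Y}\). The key step is subadditivity of matrix cumulant generating functions:
\begin{equation}
  \mathbb E\operatorname{tr}\exp\Bigl(\sum_\ell\theta\gamma_\ell A_\ell\Bigr)\le\operatorname{tr}\exp\Bigl(\sum_\ell\log\mathbb E\,e^{\theta\gamma_\ell A_\ell}\Bigr).
\end{equation}
I would prove it by induction over \(\ell\), conditioning on the other summands. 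At each step I would apply Lieb's theorem, namely that \(X\mapsto\operatorname{tr}\exp(H+\log X)\) is concave on positive definite matrices, together with Jensen's inequality.

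Next, \(\mathbb E\,e^{\theta\gamma A}=\cosh(\theta A)\). The scalar inequality \(\cosh x\le e^{x^2/2}\) transfers through the spectral theorem to \(\cosh(\theta A)\preceq e^{\theta^2A^2/2}\). Because the logarithm is operator monotone, \(\log\cosh(\theta A_\ell)\preceq\theta^2A_\ell^2/2\). Because \(\operatorname{tr}\exp\) is monotone under \(\preceq\), we get
\begin{equation}
  \mathbb E\operatorname{tr}e^{\theta Y}\le\operatorname{tr}\exp\Bigl(\tfrac{\theta^2}{2}\sum_\ell A_\ell^2\Bigr)\le d\,e^{\theta^2\sigma^2/2}.
\end{equation}
This yields \(\Pr\{\lambda_{\max}(Y)\ge t\}\le d\,e^{-\theta t+\theta^2\sigma^2/2}\). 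Choosing \(\theta=t/\sigma^2\) gives \(d\,e^{-t^2/(2\sigma^2)}\), and with \(d=d_1+d_2\) this is the claimed bound.

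The only deep ingredient is the subadditivity step, which rests on Lieb's concavity theorem. I would cite it rather than reprove it. If a weaker constant were acceptable, I would instead try the Golden--Thompson/Ahlswede--Winter route, which gives \(\sigma^2\) replaced by \(\sum_\ell\norm{A_\ell}^2\) and so is weaker in general. Every other step is a scalar inequality lifted by the spectral theorem or a standard monotonicity fact.
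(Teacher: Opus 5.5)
Your proposal is correct and is the argument of the cited source. Tropp proves this bound via the Hermitian dilation, the matrix Laplace transform, Lieb-based subadditivity of matrix cumulant generating functions, and $\mathbb{E}\,e^{\theta\gamma A}=\cosh(\theta A)\preceq e^{\theta^2A^2/2}$, whereas the paper only cites the result without proof. You leave two degenerate cases implicit, $t=0$ (the bound $d_1+d_2\ge 1$ is trivial) and $\sigma=0$ (all $B_\ell=0$), which need separate mention only because the choice $\theta=t/\sigma^2$ must be positive and finite.
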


\begin{lemma}[\(R_z\) gate, \cite{RossSelinger2016}]\label{lem:Rz-Tcount}
    For any $\theta\in\mathbb{R}$ and \(0<\eps<1\), the \(R_z\) gate $R_z(\theta)$ can be implemented up to error $\eps$ using $O(\log(1/\eps))$ $T$ gates.
\end{lemma}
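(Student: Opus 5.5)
The plan follows the number-theoretic route of Ross and Selinger~\cite{RossSelinger2016}: reduce the problem to an exact-synthesis problem over the ring \(\mathbb{Z}[1/\sqrt2,\ii]\), and then bound the required denominator exponent. The first ingredient is the Kliuchnikov--Maslov--Mosca exact-synthesis theorem. A \(2\times2\) unitary is exactly a Clifford+\(T\) circuit iff its entries lie in \(\mathbb{Z}[1/\sqrt2,\ii]\). Moreover, a unitary of the form
\begin{equation}
  W=\frac{1}{\sqrt2^{\,k}}\begin{pmatrix} u & -t^\dagger\\ t & u^\dagger\end{pmatrix},\qquad u,t\in\mathbb{Z}[\omega],\ \omega=e^{\ii\pi/4},
\end{equation}
with \(u^\dagger u+t^\dagger t=2^k\), is synthesized with \(O(k)\) \(T\) gates. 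The proof of that theorem goes through the smallest denominator exponent: each \(HT^j\) layer lowers it by one, and an \(O(1)\)-size base case finishes. So it suffices to find such a \(W\) with \(\norm{W-R_z(\theta)}\le\eps\) and \(k=O(\log(1/\eps))\); the constant \(3\) is expected but not needed.

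\emph{Step 1 (approximation target).} With \(R_z(\theta)=\diag(e^{-\ii\theta/2},e^{\ii\theta/2})\), a direct computation shows \(\norm{W-R_z(\theta)}\le\eps\) holds as soon as the rescaled top-left entry \(z=u/\sqrt2^{\,k}\) lies in the \(\eps\)-region
\begin{equation}
  \mathcal R_\eps=\{z:\ \lvert z\rvert\le1,\ \operatorname{Re}(z\,e^{\ii\theta/2})\ge1-\eps^2/2\}.
\end{equation}
Here \(t\) is forced to have small modulus by unitarity. The set \(\mathcal R_\eps\) is a thin circular segment of width \(\Theta(\eps^2)\), length \(\Theta(\eps)\), and area \(\Theta(\eps^3)\).

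\emph{Step 2 (grid problem).} Membership \(u\in\mathbb{Z}[\omega]\) with \(u^\dagger u\le 2^k\) constrains both \(u/\sqrt2^{\,k}\in\mathcal R_\eps\) and its \(\sqrt2\mapsto-\sqrt2\) Galois conjugate, which must lie in the unit disk. This is a two-dimensional grid problem for the pair of convex sets (segment, disk). I would invoke the grid-problem theorem: after a special grid operator \(G\) makes both sets ``upright'' (area comparable to bounding-box area), the number of lattice solutions grows like \(2^{k}\cdot\mathrm{area}(\mathcal R_\eps)\cdot\mathrm{area}(\text{disk})\) up to constants. Hence for \(k=3\log(1/\eps)+O(1)\) there are \(\Omega(1)\) candidate \(u\), and for each further unit of \(k\) there are geometrically more. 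The main technical work is showing that \(G\) exists with the needed uniform uprightness bound; this is done by the skew-reduction argument of \cite{RossSelinger2016}.

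\emph{Step 3 (norm equation).} For a candidate \(u\), we must solve \(t^\dagger t=\xi:=2^k-u^\dagger u\) in \(\mathbb{Z}[\omega]\), where \(\xi\in\mathbb{Z}[\sqrt2]\) is totally nonnegative. This is solvable whenever every prime of \(\mathbb{Z}[\sqrt2]\) dividing \(\xi\) to an odd power splits suitably in \(\mathbb{Z}[\omega]\). I expect this to be the main obstacle for a \emph{rigorous} \(O(\log(1/\eps))\) bound, since the solvability rate among the candidates from Step 2 is only heuristically a positive fraction.

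For a pure existence statement, which is all Lemma~\ref{lem:Rz-Tcount} needs since classical cost is not counted, I would first try enlarging \(k\) by an additive \(O(\log(1/\eps))\). This produces polynomially many (in \(1/\eps\)) well-distributed candidates \(\xi\), and I would apply a density argument: a Chebotarev/Landau-type count of totally positive elements of \(\mathbb{Z}[\sqrt2]\) representable as norms from \(\mathbb{Z}[\omega]\), applied in the short region carved out by the grid. This should give at least one solvable \(\xi\) while keeping \(k=O(\log(1/\eps))\). With such a \(u\) and \(t\), Step 1 and exact synthesis yield the circuit, and global phase is irrelevant under Definition~\ref{def:clean} up to a constant adjustment of \(\eps\).
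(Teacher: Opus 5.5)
The paper does not prove this lemma; it imports it as a black box from Ross and Selinger~\cite{RossSelinger2016}, so your proposal amounts to reconstructing that reference. Steps~1 and~2 follow it faithfully. Indeed \(\norm{W-R_z(\theta)}^2=2-2\operatorname{Re}(ze^{\ii\theta/2})\), so your \(\eps\)-region is exact. There is one harmless slip: both convex sets are dilated by \(\sqrt2^{\,k}\) in a complex coordinate, so the number of grid solutions is of order \(2^{2k}\eps^3\), not \(2^{k}\eps^3\), and candidates already appear at \(k\approx\tfrac32\log(1/\eps)\). The genuine gap is Step~3. You correctly flag it, but the density argument you propose to close it does not work.

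A Landau/Chebotarev-type count says that among \emph{all} totally positive elements of \(\mathbb{Z}[\sqrt2]\) in a large box of scale \(X\), the relative norms from \(\mathbb{Z}[\omega]\) have density of order \((\log X)^{-1/2}\). Your candidates \(\xi=2^k-u^\dagger u\) are not all the elements of a region, however. They are the values of a quadratic map on the lattice points of a thin region. There are about \(2^{2k}\eps^3\) of them, lying in a box (\(0\le\xi\lesssim2^k\eps^2\), \(0\le\xi^\bullet\le2^k\), with \(\xi^\bullet\) the \(\sqrt2\mapsto-\sqrt2\) conjugate) that contains about \(2^{2k}\eps^2\) elements. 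So they form a subset of relative density \(O(\eps)\), however much you enlarge \(k\). Because \(\eps\ll(\log(1/\eps))^{-1/2}\), no counting argument forces this sparse set to meet the norms. You would need an equidistribution or sieve lower bound for the values of \(2^k-u^\dagger u\) on the skewed grid region, which is essentially the input Ross and Selinger replace by a hypothesis on the distribution of primes. As written, the \(O(\log(1/\eps))\) bound is therefore not established.

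Since the lemma only needs existence, there are two unconditional repairs. First, you can use the clean ancillas permitted by Definition~\ref{def:clean} and follow Kliuchnikov, Maslov, and Mosca~\cite{KliuchnikovMaslovMosca2013Approx}. There the leftover norm only has to be written as a sum of four squares, which is always possible by Lagrange's theorem, giving \(O(\log(1/\eps))\) \(T\) gates with at most two ancillas. Second, you can use the spectral gap of the Clifford+\(T\) group (Bourgain--Gamburd, since the generators have algebraic entries). By the Harrow--Recht--Chuang argument, this makes words of length \(O(\log(1/\eps))\) an \(\eps\)-net of \(\mathrm{SU}(2)\).
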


\begin{lemma}[Toffoli gate, \cite{Maslov2016}]
    An $n$-qubit generalized Toffoli gate can be implemented exactly using $O(n)$ $T$ gates and ancillas.
\end{lemma}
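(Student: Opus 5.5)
The plan is the standard compute--uncompute ladder, which keeps the \(T\)-count and the ancilla count linear in \(n\). Write the generalized Toffoli as acting on controls \(c_1,\dots,c_{n-1}\) and a target \(t\); it flips \(t\) iff all controls are \(1\). For \(n\le 3\) the claim is immediate: NOT and CNOT are Clifford, and the ordinary Toffoli has an exact Clifford+\(T\) decomposition with \(7\) \(T\) gates and no ancillas. So assume \(n\ge 4\) and introduce \(n-3\) clean ancillas \(a_1,\dots,a_{n-3}\).

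\textbf{Compute.} Apply a Toffoli with controls \(c_1,c_2\) and target \(a_1\). Then, for \(j=2,\dots,n-3\), apply a Toffoli with controls \(a_{j-1},c_{j+1}\) and target \(a_j\). On a computational basis input with clean ancillas, \(a_j\) then holds \(c_1\wedge\cdots\wedge c_{j+1}\).

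\textbf{Target.} Apply one Toffoli with controls \(a_{n-3},c_{n-1}\) and target \(t\). This flips \(t\) exactly when all \(n-1\) controls are \(1\).

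\textbf{Uncompute.} Run the compute ladder in reverse order. Each Toffoli is self-inverse, and the controls \(c_i\) are never modified. Hence every \(a_j\) returns to \(\ket{0}\) on every basis input, and by linearity on every input with clean ancillas.

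Altogether the circuit uses \(2(n-3)+1=O(n)\) Toffolis, so \(O(n)\) \(T\) gates and \(n-3=O(n)\) ancillas. The implementation is exact, i.e.\ Definition~\ref{def:clean} holds with \(\eps=0\).

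The one point needing care is ancilla cleanliness and phase consistency. With exact Toffolis both follow directly from the basis-state argument above, since every gate is a permutation matrix. To improve the constant, as in \cite{Maslov2016}, one would replace the ladder Toffolis by relative-phase Toffolis (\(4\) \(T\) gates each). This is valid only because each such gate in the compute half is later undone by its exact inverse, which cancels the diagonal relative phases; the target Toffoli must remain an exact Toffoli. Checking this cancellation is the main step I would verify explicitly, but it only affects constants, not the \(O(n)\) bound.
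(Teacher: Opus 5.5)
Your proof is correct and complete. The ladder uses $2(n-3)+1$ exact Toffolis, hence $7(2n-5)=O(n)$ $T$ gates, with $n-3$ clean ancillas restored exactly to $\ket{0}$. That is precisely exact implementation in the sense of Definition~\ref{def:clean} with $\eps=0$, and your small-$n$ cases are handled correctly.

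The paper gives no proof of its own; it only cites \cite{Maslov2016}. That construction follows the same compute--uncompute pattern but uses relative-phase Toffolis in the ladder, which improves only the constants. Constants are irrelevant both for the $O(n)$ statement and for its one use, in Lemma~\ref{lem:signal-phase}. There, what matters is that the ancillas return clean, so the same workspace can be reused across all $M$ signal rotations. Your construction guarantees this.

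The cancellation step you flagged for the relative-phase variant is short:
\begin{itemize}
\item The relative-phase compute half equals a diagonal phase times the exact ladder's permutation, because products of permutation and diagonal matrices keep that form.
\item None of the ladder gates touch $t$, so this diagonal does not depend on $t$.
\item Such a diagonal commutes with the middle Toffoli, which only flips $t$ conditioned on the other qubits.
\item Therefore conjugating the middle Toffoli by the relative-phase ladder, with the adjoint used for uncomputation, gives exactly the same operator as conjugating by the exact ladder.
\end{itemize}
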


\section{Simultaneous block flattening of unitary matrices}
\label{sec:flattening}
This section shows that any unitary $U\in \U(2^n)$ can be transformed into $V\in\U(2^n)$ by two Walsh transforms and two Boolean phase oracles, so that all of $V$'s blocks have uniformly bounded operator norms.

Fix an integer \(k\) with \(1\le k\le n\), and set \(r=n-k\). Let
\begin{equation}\label{eq:parameter}
  d=2^n,\quad b=2^k \quad \text{and} \quad D=d/b=2^{r}.
\end{equation}
We split the computational basis into \(D\) consecutive blocks, each
of dimension \(b\).  Equivalently, identify
\(\C^d\cong\C^D\otimes\C^b\) and write its basis states as
\(\ket{J,y}\), where \(0\le J<D\) is the block address and
\(0\le y<b\) is the position inside that block.  For each \(0\le I<D\), define \(P_I:\C^d\longrightarrow\C^b\) on basis states by
\begin{equation}
  P_I\ket{J,y}=\delta_{IJ}\ket y.
  \label{eq:block-selector}
\end{equation}
Thus \(P_I\) selects block \(I\), while \(P_I^\dagger\) embeds a
\(b\)-vector into that block.  These maps obey
\(
  P_IP_J^\dagger=\delta_{IJ}I_b,\) and \(
  \sum_{I=0}^{D-1}P_I^\dagger P_I=I_d.
\)
The \((I,J)\)-block of a matrix \(V\in\C^{d \times d}\) is
\(P_IVP_J^\dagger\).

\begin{lemma}
\label{lem:flatten}
For every \(U\in\U(d)\), there exist Boolean phase oracles $D_1,D_2\in \U(d)$ such that every block of
\begin{equation}
  V=H_dD_1UD_2H_d
  \label{eq:V-definition}
\end{equation}
satisfies
\begin{equation}
  \norm{P_IVP_J^\dagger}
  \le
  g:=\min\!\left\{
    1,
    C_{\mathrm{flat}}\sqrt{b/d}\log(2d)
  \right\},
  \label{eq:block-flat}
\end{equation}
where $0\le I,J<D$ and $C_{\mathrm{flat}}=16\ln2$.

\end{lemma}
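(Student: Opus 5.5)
We pick $D_1=\diag(s)$ and $D_2=\diag(\tau)$ with $s,\tau\in\{\pm1\}^d$ independent vectors of Rademacher variables. We show that with positive probability all $D^2$ blocks satisfy \eqref{eq:block-flat}, which proves existence. The bound $g\le 1$ is automatic because every block of a unitary has norm at most one, so only the bound $C_{\mathrm{flat}}\sqrt{b/d}\log(2d)$ needs proof. We argue in two stages. First we fix a good $D_2$ by a union bound. Then, conditioned on that $D_2$, we choose $D_1$ by a second union bound. In both stages we apply Lemma~\ref{lem:rademacher}. The key structural fact is that every column $h_{I,x}:=P_IH_de_x\in\C^b$ of a block of the Walsh transform has entries $\pm1/\sqrt d$. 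Hence $\norm{h_{I,x}}^2=b/d$, and $\sum_x h_{I,x}h_{I,x}^\dagger=P_IH_dH_d^\dagger P_I^\dagger=I_b$.

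\emph{Stage 1 (row flattening via $D_2$).} Let $W:=UD_2H_d$ and write $w_{x,J}:=P_JW^\dagger e_x\in\C^b$. This vector is row $x$ of $W$ restricted to column block $J$. Since $H_d$ is real symmetric,
\begin{equation}
  w_{x,J}=\sum_{z}\tau_z\,\overline{U_{xz}}\,h_{J,z},
\end{equation}
which is a Rademacher series of $b\times1$ matrices. Its variance parameter is at most $\max\{\sum_z|U_{xz}|^2\norm{h_{J,z}}^2,\ \norm{\sum_z|U_{xz}|^2h_{J,z}h_{J,z}^\dagger}\}\le b/d$, because row $x$ of $U$ is a unit vector. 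Lemma~\ref{lem:rademacher} then gives $\Pr\{\norm{w_{x,J}}\ge t_1\}\le(b+1)e^{-t_1^2d/(2b)}$. We take a union bound over the $dD$ pairs $(x,J)$ and choose $t_1^2=c\,(b/d)\ln(2d)$ with a suitable constant $c$. Then with probability at least $1/2$ we have $\max_{x,J}\norm{w_{x,J}}^2\le c\,(b/d)\ln(2d)$. We fix such a $D_2$.

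\emph{Stage 2 (block flattening via $D_1$).} With $D_2$ fixed, each block is
\begin{equation}
  P_IVP_J^\dagger=P_IH_dD_1WP_J^\dagger=\sum_x s_x\,h_{I,x}w_{x,J}^\dagger,
\end{equation}
a Rademacher series with $B_x=h_{I,x}w_{x,J}^\dagger$. We bound the two variance terms separately:
\begin{equation}
  \norm{\sum_xB_xB_x^\dagger}=\norm{\sum_x\norm{w_{x,J}}^2h_{I,x}h_{I,x}^\dagger}\le\max_x\norm{w_{x,J}}^2,
\end{equation}
using $\sum_xh_{I,x}h_{I,x}^\dagger=I_b$, and
\begin{equation}
  \norm{\sum_xB_x^\dagger B_x}=\frac bd\norm{P_JW^\dagger WP_J^\dagger}=\frac bd.
\end{equation}
Hence $\sigma^2\le c\,(b/d)\ln(2d)$ by Stage 1. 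Lemma~\ref{lem:rademacher} with $d_1=d_2=b$, together with a union bound over the $D^2$ pairs $(I,J)$, shows that $t^2=2\sigma^2\ln(4bD^2)=O\bigl((b/d)\ln^2(2d)\bigr)$ suffices for failure probability below one. This yields $\norm{P_IVP_J^\dagger}=O\bigl(\sqrt{b/d}\log(2d)\bigr)$ for every block simultaneously.

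The main obstacle is that a direct Rademacher bound in $D_1$ alone fails: rows of $UD_2H_d$ restricted to a column block can be concentrated, and then $\sigma^2$ becomes as large as one. This is exactly why Stage 1 must flatten the row pieces $w_{x,J}$ first, and why the two random sign patterns are analyzed sequentially rather than jointly. What remains is bookkeeping of constants. We would set $c=4$, so that $\ln(2(b+1)dD)\le 2\ln(2d)$ controls the Stage 1 union bound. We would then check that $\sqrt{2c\cdot 3}\,\ln(2d)\le 16\ln2\cdot\log(2d)$, after converting $\ln$ to $\log=\log_2$ and using $\ln(4bD^2)\le3\ln(2d)$. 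This gives the stated $C_{\mathrm{flat}}=16\ln2$ with room to spare.
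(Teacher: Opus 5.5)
Your proof is correct and follows the paper's argument. You first fix $D_2$ so that every row segment of $UD_2H_d$ restricted to a column block has squared norm $O((b/d)\ln(2d))$. You then apply the rectangular matrix Rademacher bound to $P_IH_dD_1(UD_2H_d)P_J^\dagger=\sum_x s_x h_{I,x}w_{x,J}^\dagger$, with the same two variance computations as the paper and a union bound over the $D^2$ blocks.

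The only difference is in Stage 1. You bound the row segments directly with the $b\times 1$ case of Lemma~\ref{lem:rademacher}, which gives $\mu=4(b/d)\ln(2d)$. The paper instead bounds each entry by scalar Hoeffding on the real and imaginary parts and then sums $b$ entries, which gives $\mu=16(b/d)\ln(2d)$. Your constants check out ($\sqrt{24}\le 16$) and give $C_{\mathrm{flat}}=16\ln 2$ with room to spare.
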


\begin{proof}
Choose \(D_2=\diag(\xi_1,\ldots,\xi_d)\) where the $\xi_\ell$ are independent Rademacher random variables.  For a fixed row \(p\) and column \(q\),
\begin{equation}
  (UD_2H_d)_{pq}
  =\frac1{\sqrt d}\sum_{\ell=1}^d
     U_{p\ell}\xi_\ell\sigma_{\ell q},
  \quad
  \sigma_{\ell q}\in\{+1,-1\}.
  \label{eq:entry-rademacher-sum}
\end{equation}
If \(c_\ell=d^{-1/2}U_{p\ell}\sigma_{\ell q}\), then unitarity of
\(U\) gives
\begin{equation}
  \sum_{\ell=1}^d|c_\ell|^2
  =\frac1d\sum_{\ell=1}^d|U_{p\ell}|^2
  =\frac1d.
  \label{eq:entry-coefficient-norm}
\end{equation}
Let $u_\ell=\operatorname{Re} c_\ell$ and $v_\ell=\operatorname{Im} c_\ell$. We can verify that 
\begin{align}
    &\mathbb{E}[{\operatorname{Re}(UD_2H_d)_{pq}}]=\mathbb{E}[\sum_{\ell=1}^d \xi_\ell u_\ell]=\sum_{\ell=1}^d \mathbb{E}[\xi_\ell] u_\ell=0,\\
    &\mathbb{E}[{\operatorname{Im}(UD_2H_d)_{pq}}]=\mathbb{E}[\sum_{\ell=1}^d \xi_\ell v_\ell]=\sum_{\ell=1}^d \mathbb{E}[\xi_\ell] v_\ell=0.
\end{align}
Combined with Hoeffding's inequality, this implies
\begin{align}
    &\Pr_{D_2}\{|\operatorname{Re} (UD_2 H_d)_{pq}|\ge t \}\le 2\exp\left(-\frac{dt^2}{2}\right),\\
    &\Pr_{D_2}\{|\operatorname{Im} (UD_2 H_d)_{pq}|\ge t \} \le 2\exp\left(-\frac{dt^2}{2}\right),
\end{align}
for all $t\ge 0$.
If \(|(UD_2H_d)_{pq}| \ge \alpha\), then at least one of
\(|\operatorname{Re} (UD_2H_d)_{pq}|\) and \(|\operatorname{Im} (UD_2H_d)_{pq}|\) is at least
\(\alpha/\sqrt2\).  Therefore
\begin{multline}
  \Pr\!\left\{|(UD_2H_d)_{pq}|\ge \alpha\right\}
  \le   \Pr\!\left\{|\operatorname{Re}(UD_2H_d)_{pq}|\ge \frac{\alpha}{\sqrt{2}}\right\}\\+\Pr\!\left\{|\operatorname{Im}(UD_2H_d)_{pq}|\ge \frac{\alpha}{\sqrt{2}}\right\}
  \le 4\exp\!\left(-\frac{d\alpha^2}{4}\right).
  \label{eq:complex-entry-tail}
\end{multline}
Set $\alpha=4\sqrt{\frac{\ln(2d)}d}$.
There are \(d^2\) entries, so the union bound gives
\begin{align}
  &\Pr\!\left\{
     \max_{p,q}|(UD_2H_d)_{pq}|\ge\alpha
  \right\}\\
   &\le 4d^2\exp\!\left(-\frac{d\alpha^2}{4}\right)=4d^2(2d)^{-4}
    =\frac1{4d^2}
    \le\frac14.
  \label{eq:entry-union-bound}
\end{align}
Hence there exists a choice of \(D_2\) satisfying
\begin{equation}
  \max_{p,q}|(UD_2H_d)_{pq}|<\alpha.
  \label{eq:entry-flat}
\end{equation}
Fix such a choice for the rest of the proof.
   
For each input column block \(J\), define the \(d\times b\) column slab
\begin{equation}
  Z_J=UD_2H_dP_J^\dagger.
  \label{eq:ZJ-definition}
\end{equation}
Its columns are orthonormal:
\begin{equation}
  Z_J^\dagger Z_J=I_b.
  \label{eq:ZJ-isometry}
\end{equation}
Write its \(\ell\)-th row as \(z_\ell^\dagger\), where
\(z_\ell\in\C^b\).  Every row contains \(b\) entries covered by
Eq.~\eqref{eq:entry-flat}, and therefore
\begin{equation}
  \norm{z_\ell}^2
  <b\alpha^2
  =16\frac bd\ln(2d)
  =:\mu.
  \label{eq:row-coherence}
\end{equation}

Now choose \(D_1=\diag(\eta_1,\ldots,\eta_d)\) with independent Rademacher random
variables.  Fix \(I,J\), let \(e_\ell\) be the \(\ell\)-th standard basis
vector, and put $a_\ell=P_IH_de_\ell\in\C^b$. 
\begin{equation}
    \norm{a_\ell}^2=\norm{P_I H_d e_\ell}^2=b/d.\label{eq:a-ell-norm}
\end{equation}
Since \(D_1=\sum_\ell\eta_\ell e_\ell e_\ell^\dagger\), we have
\begin{align}
  P_IH_dD_1Z_J
   =\sum_{\ell=1}^d
      \eta_\ell P_IH_de_\ell e_\ell^\dagger Z_J =\sum_{\ell=1}^d\eta_\ell a_\ell z_\ell^\dagger.
  \label{eq:matrix-rademacher-sum}
\end{align}
This is a matrix Rademacher series with coefficient matrices
\(B_\ell=a_\ell z_\ell^\dagger\in\C^{b\times b}\).

We can verify that
\begin{align}
   \sum_{\ell=1}^d a_\ell a_\ell^\dagger
   &=P_IH_d\left(\sum_{\ell=1}^de_\ell e_\ell^\dagger\right)
       H_d^\dagger P_I^\dagger
     =I_b.
  \label{eq:a-resolution}
\end{align}
Together with Eqs.~\eqref{eq:ZJ-isometry}, \eqref{eq:row-coherence} and \eqref{eq:a-ell-norm}, the two rectangular variance matrices satisfy
\begin{align}
  \sum_{\ell=1}^dB_\ell B_\ell^\dagger
    &=\sum_{\ell=1}^d
       \norm{z_\ell}^2a_\ell a_\ell^\dagger \preceq
       \mu\sum_{\ell=1}^d a_\ell a_\ell^\dagger
     =\mu I_b, \label{eq:left-variance}\\
     \sum_{\ell=1}^dB_\ell^\dagger B_\ell
   &=\sum_{\ell=1}^d
      \norm{a_\ell}^2z_\ell z_\ell^\dagger =\frac bd\sum_{\ell=1}^d z_\ell z_\ell^\dagger
    =\frac bd Z_J^\dagger Z_J
    =\frac bd I_b.
  \label{eq:right-variance}
\end{align}
The variance parameter is thus at most
\begin{align}
  \sigma^2
   =&\max\!\left\{
      \norm{\sum_\ell B_\ell B_\ell^\dagger},
      \norm{\sum_\ell B_\ell^\dagger B_\ell}
     \right\}\\
     \le & \max\left\{\mu,\frac{b}{d}\right\}
   =\mu,
  \label{eq:matrix-variance}
\end{align}
where \(\mu\ge b/d\) because \(16\ln(2d)>1\).

Applying Lemma~\ref{lem:rademacher} with \(d_1=d_2=b\) and \(t=16\sqrt{b/d}\ln(2d)\) gives
\begin{align}
   &\Pr_{D_1}\left\{\norm{P_I H_d D_1 Z_J}\ge 16\sqrt{\frac{b}{d}} \ln(2d)\right\}\\
  =&\Pr_{D_1}\!\left\{
    \norm{\sum_\ell\eta_\ell B_\ell}\ge 16\sqrt{\frac{b}{d}} \ln(2d)
  \right\}\\
  \le & 2b\exp\!\left(-\frac{16^2(b/d)\ln^2 (2d)}{2\mu}\right)=2b(2d)^{-8}.
  \label{eq:tropp-rectangular-stated}
\end{align}
Note that $P_IH_dD_1Z_J=P_IH_dD_1UD_2H_dP_J^\dagger=P_IVP_J^\dagger$.
A union bound over all \(D^2\) pairs gives
\begin{align}
  &\Pr_{D_1}\!\left\{
    \exists I,J:\norm{P_IH_dD_1Z_J}\ge 16\sqrt{\frac{b}{d}} \ln(2d)
  \right\}\\
  =&
  \Pr_{D_1}\!\left\{\exists I,J:\norm{P_IVP_J^\dagger}\ge 16\sqrt{\frac{b}{d}} \ln(2d)\right\}\\
   \le & 2bD^2(2d)^{-8}\le 2d^2(2d)^{-8}<1/2^7.
  \label{eq:all-block-union-bound}
\end{align}
Therefore, there exist Boolean phase oracles $D_1$ and $D_2$ such that $\norm{P_I V P_J^\dagger}\le 16\sqrt{b/d}\ln(2d)
=(16\ln2)\sqrt{b/d}\log(2d)$ for $V=H_d D_1 U D_2H_d$ and $0\le I,J <D$. Moreover, we have $\norm{P_IVP_J^\dagger}
  \le\norm{P_I}\norm{V}\norm{P_J^\dagger}
  \le1$. This completes the proof.
\end{proof}
\begin{remark}
This section constructs a preconditioned, block-flattened unitary \(V\).  In the subsequent sections we synthesize \(V\); undoing the preconditioning in Lemma~\ref{lem:flatten} then recovers \(U\) at an additional cost of \(O(\sqrt d)\) \(T\) gates, because the Walsh transforms are Clifford.

If \(U\) were block-encoded directly, its blocks would satisfy only the
trivial bound \( \norm{P_IUP_J^\dagger} \le1\). This would result in a larger
block-encoding normalization, thereby increasing the degree---and hence
the number of block-encoding queries---required by the QSVT step, as
well as the overall \(T\)-count. In contrast, block flattening ensures
that \(\norm{P_IVP_J^\dagger}\le g\) for all \(I,J\), where \(g\ll1\) in
the relevant parameter regime. The smaller block norms reduce the
normalization factor and lead to a more \(T\)-efficient construction. 
\end{remark}

\section{Block encoding of a block-flattened unitary}
\label{sec:block-encoding}
In this section, we construct a quantum circuit that implements a block encoding of a block-flattened unitary.

Let $d,b,D$ be as defined in Eq.~\eqref{eq:parameter}. For any unitary $U\in\U(d)$, choose a corresponding block-flattened unitary \(V\in\U(d)\) guaranteed by Lemma~\ref{lem:flatten}. Define
\begin{equation}
  V_{IJ}:=P_IVP_J^\dagger,
  \qquad
  C_{IJ}:=\frac{V_{IJ}}g.
  \label{eq:Cij-definition}
\end{equation}
Lemma~\ref{lem:flatten} shows \(\norm{C_{IJ}}\le1\), so $C_{IJ}$ is a contraction.

For a contraction \(C_{IJ}\in\C^{b\times b}\), define its Julia--Halmos
dilation
\begin{equation}
  \operatorname{Hal}(C_{IJ})=
  \begin{pmatrix}
    C_{IJ} & (I-C_{IJ}C_{IJ}^\dagger)^{1/2}\\[1mm]
    (I-C_{IJ}^\dagger C_{IJ})^{1/2} & -C_{IJ}^\dagger
  \end{pmatrix}.
  \label{eq:halmos-dilation}
\end{equation}

In particular, $\operatorname{Hal}(C_{IJ})$ is a block encoding of $C_{IJ}$, i.e.,

\begin{equation}
  (\bra0 \otimes I_b)\operatorname{Hal}(C_{IJ})(\ket0\otimes I_b)=C_{IJ}.
  \label{eq:halmos-top-left}
\end{equation}

For completeness, we check unitarity.  If
\(C_{IJ}=L\Sigma R^\dagger\) is a singular-value decomposition, then
\begin{align}
  &C_{IJ}(I-C_{IJ}^\dagger C_{IJ})^{1/2}\\
   =&L\Sigma(I-\Sigma^2)^{1/2}R^\dagger =(I-C_{IJ}C_{IJ}^\dagger)^{1/2}C_{IJ}.
  \label{eq:defect-intertwining}
\end{align}
Using this identity, direct block multiplication gives
\begin{equation}
  \operatorname{Hal}(C_{IJ})\operatorname{Hal}(C_{IJ})^\dagger=\operatorname{Hal}(C_{IJ})^\dagger\operatorname{Hal}(C_{IJ})=I_{2b}.
  \label{eq:halmos-unitarity}
\end{equation}

To construct the quantum circuit for a block encoding of $V\in \U(d)$, we introduce four registers: an \(r\)-qubit register \(\B\), an \(r\)-qubit register \(\X\), a one-qubit register \(\f\), and a \(k\)-qubit register \(\Y\).  Registers \(\B\) and \(\Y\) are logical input registers, whereas \(\X\) and \(\f\) are ancillary registers initialized to \(\ket{0^r}\) and \(\ket0\), respectively.

Define a $(2r,k+1)$-UCU $\SELECT$ acting on registers $\X,\B,\f$ and $\Y$,
\begin{equation}
  \SELECT_{\X\B\f\Y}  :=\sum_{I,J=0}^{D-1}
       \proj{I}_\X\otimes\proj{J}_\B
       \otimes\operatorname{Hal}(C_{IJ})_{\f\Y}.
  \label{eq:select-definition}
\end{equation}

Construct a unitary \(W\) as follows (also see Fig. \ref{fig:W-circuit}):

\begin{enumerate}[leftmargin=2.5em,label={\arabic*.}]
\item Apply \(H_D=H^{\otimes r}\) to \(\X\). 

\item Apply \(\SELECT\) to $\X,\B,\f$ and $\Y$.

\item Apply \(H_D=H^{\otimes r}\) to \(\B\).

\item Swap registers \(\X\) and \(\B\) by $\operatorname{SWAP}(\X,\B)$.

\end{enumerate}

\begin{figure}[ht]
\centering
\begin{tikzpicture}[
  x=1.35cm,y=0.72cm,
  gate/.style={draw,minimum width=0.9cm,minimum height=0.55cm,fill=blue!3},
  lab/.style={font=\small},
  wire/.style={thick}
]
  \foreach \yy/\name in {3.0/{\(\X:\ket{0^r}\)},2.0/{\(\B:\ket J\)},1.0/{\(\f:\ket0\)},0/{\(\Y:\ket y\)}} {
    \draw[wire] (0,\yy) -- (4.5,\yy);
    \node[lab,left] at (0,\yy) {\name};
  }
  \node[gate] at (0.5,3.0) {\(H_D\)};
  \node[gate,minimum width=1.6cm,minimum height=2.75cm] at (1.75,1.5) {\(\SELECT\)};
  \node[gate] at (3,2.0) {\(H_D\)};
  \draw (4,3.0) node[cross out,draw,minimum size=4pt] {};
  \draw (4,2.0) node[cross out,draw,minimum size=4pt] {};
  \draw (4,3.0) -- (4,2.0);
  \node[lab,above] at (4,3.25) {\(\operatorname{SWAP}(\X,\B)\)};
\end{tikzpicture}
\caption{A quantum circuit implementing the unitary \(W\).}
\label{fig:W-circuit}
\end{figure}

Let the clean embedding $J_0:\mathcal H_{\B\Y}\longrightarrow\mathcal H_{\X\B\f\Y}$ satisfy
\begin{equation}
  J_0(\ket J_\B\ket y_\Y)
   =\ket{0^r}_\X\ket J_\B\ket0_\f\ket y_\Y.
  \label{eq:J0-definition}
\end{equation}

The following lemma shows that $W$ is a block encoding of $V$.
\begin{lemma}
\label{lem:uniform-block}
Let $\rho=Dg$. Then
\begin{equation}
  J_0^\dagger WJ_0
  =\frac1D\sum_{I,J=0}^{D-1}
      \ket I\!\bra J_\B\otimes (C_{IJ})_{\Y}
  =\frac V\rho,
  \label{eq:uniform-clean-block}
\end{equation}

\end{lemma}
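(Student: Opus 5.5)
We prove the identity by tracking a single clean input state through the four steps of \(W\) and then projecting back with \(J_0^\dagger\). The first equality is the real content. The second follows from the definitions: \(C_{IJ}=V_{IJ}/g\) by Eq.~\eqref{eq:Cij-definition}, and \(V=\sum_{I,J}P_I^\dagger V_{IJ}P_J\) by the resolution \(\sum_I P_I^\dagger P_I=I_d\). Under the identification \(\C^d\cong\C^D\otimes\C^b\), the operator \(P_I^\dagger V_{IJ}P_J\) is exactly \(\ket I\!\bra J\otimes V_{IJ}\). Hence
\[
\frac1D\sum_{I,J}\ket I\!\bra J\otimes C_{IJ}=\frac{1}{Dg}\sum_{I,J}\ket I\!\bra J\otimes V_{IJ}=\frac{V}{\rho}.
\]

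For the first equality, fix the input \(J_0\ket J_\B\ket y_\Y=\ket{0^r}_\X\ket J_\B\ket0_\f\ket y_\Y\) and apply the circuit step by step.

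\begin{enumerate}[leftmargin=2.5em,label={\arabic*.}]
\item Step~1 turns \(\X\) into \(D^{-1/2}\sum_I\ket I_\X\).
\item Step~2, the \(\SELECT\) of Eq.~\eqref{eq:select-definition}, gives
\[
D^{-1/2}\sum_I\ket I_\X\ket J_\B\,\operatorname{Hal}(C_{IJ})\ket0_\f\ket y_\Y .
\]
\item Step~3 applies \(H_D\) to \(\B\).
\item Step~4 swaps \(\X\) and \(\B\).
\end{enumerate}

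We then compute the inner product with an arbitrary output \(J_0\ket{I'}_\B\ket{y'}_\Y\), that is, with \(\bra{0^r}_\X\bra{I'}_\B\bra0_\f\bra{y'}_\Y\). After the swap, \(\bra{0^r}_\X\) pairs with what was register \(\B\) before the swap. That register holds \(H_D\ket J\), and \(\bra{0^r}H_D\ket J=D^{-1/2}\) for every \(J\). Likewise \(\bra{I'}_\B\) pairs with the former \(\X\) register and selects the term \(I=I'\).

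The \(\f,\Y\) part contributes \(\bra0_\f\bra{y'}\operatorname{Hal}(C_{I'J})\ket0_\f\ket y=\bra{y'}C_{I'J}\ket y\) by Eq.~\eqref{eq:halmos-top-left}. Collecting the two factors \(D^{-1/2}\) gives the matrix element
\[
\frac1D\,\bra{y'}C_{I'J}\ket y ,
\]
which is the \((I',y'),(J,y)\) entry of \(\frac1D\sum\ket I\!\bra J\otimes C_{IJ}\).

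The only delicate point is bookkeeping. One must check that the swap makes the projection \(\bra{0^r}\) act on the Walsh-transformed \(\B\) register, and that it makes \(\bra{I'}\) act on the \(\SELECT\) control register \(\X\). This is also what makes \(W\) uniform in \(J\) rather than diagonal. Everything else is a direct computation.
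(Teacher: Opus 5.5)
Your proof is correct and takes essentially the same route as the paper. Both push the clean input $\ket{0^r}_\X\ket J_\B\ket0_\f\ket y_\Y$ through the four steps of $W$ and project with $J_0^\dagger$, using $\bra{0^r}H_D\ket J=D^{-1/2}$ and the top-left block of the Julia--Halmos dilation. The differences are cosmetic: you read off matrix elements directly instead of writing out the orthogonal components $\ket{\perp}$ and $\ket{\perp'}$ as the paper does, and you make the second equality $V=\sum_{I,J}\ket I\!\bra J\otimes V_{IJ}$ explicit, whereas the paper uses it implicitly.
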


\begin{proof}
For $0\le J <D$ and $0\le y < b$, the initial state
is
\begin{equation}
  \ket{0^r}_\X\ket J_\B\ket0_\f\ket y_\Y.
\end{equation}
After the first $H_D$ on $\X$, it is
\begin{equation}
  \frac1{\sqrt D}\sum_{I=0}^{D-1}
    \ket I_\X\ket J_\B\ket0_\f\ket y_\Y.
  \label{eq:W-state-after-first-H}
\end{equation}
Applying \(\SELECT\), we obtain
\begin{align}
  &\frac1{\sqrt D}\sum_{I=0}^{D-1}
    \ket I_\X\ket J_\B \operatorname{Hal}(C_{IJ})(\ket0_\f\ket y_\Y)\\
    &=\frac1{\sqrt D}\sum_{I=0}^{D-1}
    \ket I_\X\ket J_\B \ket0_\f C_{IJ} \ket y_\Y + \ket{\perp},
  \label{eq:W-state-after-select}
\end{align}
where 
\begin{equation}
  \ket{\perp}:=\frac1{\sqrt D}\sum_{I=0}^{D-1}
    \ket I_\X\ket J_\B \ket1_\f (I-C_{IJ}^\dagger C_{IJ})^{1/2}\ket y_\Y  
\end{equation}
and $\ket{\perp}$ is orthogonal to the first component. For every computational basis label \(\ket{J}\), $\bra{0^r} H_D\ket J=1/{\sqrt D}$. After applying the $H_D$ on $\B$, we have
\begin{align}
    &\frac1{\sqrt D}\sum_{I=0}^{D-1}
    \ket I_\X H_D\ket J_\B \ket0_\f C_{IJ} \ket y_\Y +(H_D)_{\B}\ket{\perp}\\
    =&\frac1{D}\sum_{I=0}^{D-1}
    \ket I_\X \ket {0^r}_\B \ket0_\f C_{IJ} \ket y_\Y + \ket{\perp'},
\end{align}
where
\begin{equation}
    \ket{\perp'}:=(H_D)_{\B}\ket{\perp} + \frac{1}{D}\sum_{I=0}^{D-1}\sum_{K=1}^{D-1}\ket{I}_{\X}\sigma_{JK}\ket{K}_{\B}\ket{0}_{\f}C_{IJ}\ket{y}_{\Y}
\end{equation}
where \(J,K\) are identified with their \(r\)-bit strings and \(\sigma_{JK}:=(-1)^{J\cdot K}\) and $\ket{\perp'}$ is orthogonal to the first component.
The final swap maps the first component to
\begin{equation}
  \frac1D\sum_{I=0}^{D-1}
    \ket0_\X\ket I_\B\ket0_\f C_{IJ}\ket y_\Y.
  \label{eq:W-clean-state-after-swap}
\end{equation}
The image of \(\ket{\perp'}\) under the swap has either
\(\X\ne0^r\) or \(\f=1\), and is therefore annihilated by \(J_0^\dagger\).

Applying \(J_0^\dagger\) leaves
\begin{align}
  &J_0^\dagger WJ_0(\ket{J}_{\B}\ket{y}_{\Y})
  =\frac1D\sum_{I=0}^{D-1}
      \ket I_{\B}C_{IJ}\ket y_{\Y}\notag\\
  = &\frac1{Dg}\sum_{I=0}^{D-1}
      \ket I_{\B}V_{IJ}\ket y_{\Y}
   =\frac V{Dg}(\ket J_{\B}\ket y_{\Y}).
  \label{eq:clean-action-on-block-basis}
\end{align}
This completes the proof.
\end{proof}

Let \(\mathcal K_{\mathrm{work}}\) be the compiler's clean workspace and let
\(\iota:\mathcal H_{\X\B\f\Y}\to
\mathcal H_{\X\B\f\Y}\otimes\mathcal K_{\mathrm{work}}\) append its
all-zero state.  Set 
\begin{equation}\label{eq:J}
    J=\iota J_0.
\end{equation}
and define the ideal lifted query
\begin{equation}
    \widehat W:=W\otimes I_{\mathrm{work}};
\end{equation}
then
\(\widehat W\iota=\iota W\).
The following lemma shows the $T$-count of unitary $W$ in Lemma \ref{lem:uniform-block}.
\begin{lemma}
\label{lem:compiled-base}
Assume \(k<n\). For every \(0<\delta<1/2\), there is a Clifford+\(T\) unitary circuit
\(\widetilde W\) such that
\begin{equation}
  \norm{\widetilde W\,\iota-\iota W}\le\delta.
  \label{eq:compiled-W-guarantee}
\end{equation}
Its $T$-count and ancilla-count are both
\begin{equation}
  O\!\left(d\sqrt{k+\log (1/\delta)}+b^2(k+\log (1/\delta))\right).
  \label{eq:compiled-base-cost}
\end{equation}
Moreover, for $A=J^\dagger\widetilde WJ$,
\begin{equation}
  \norm{A-\rho^{-1}V}\le\delta.
  \label{eq:actual-clean-block-bound}
\end{equation}
\end{lemma}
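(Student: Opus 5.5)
The plan is to compile the four steps of $W$ separately. Steps 1, 3 and 4 (the two copies of $H_D$ and $\operatorname{SWAP}(\X,\B)$) are Clifford, so they are implemented exactly with no $T$ gates and no workspace. All of the cost therefore sits in step 2, the $\SELECT$ operator of Eq.~\eqref{eq:select-definition}.

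\textbf{Compiling $\SELECT$.} After reordering qubits (which is free, since it is a relabelling or a SWAP network), $\SELECT$ is a $(2r,k+1)$-UCU on $n'=2r+k+1$ qubits. Its control is the $2r$-qubit register $\X\B$, and the target unitaries $\operatorname{Hal}(C_{IJ})\in\U(2^{k+1})$ act on $\f\Y$. The unitarity of each target is Eq.~\eqref{eq:halmos-unitarity}, and $k<n$ guarantees $1\le k+1<n'$. Lemma~\ref{lem:tan-controlled} with accuracy $\delta$ then gives a Clifford+$T$ circuit $\widetilde S$ with $\norm{\widetilde S\iota_S-\iota_S\SELECT}\le\delta$, where $\iota_S$ appends the compiler's clean workspace. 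Its $T$-count and ancilla-count are
\[
O\!\left(2^{(n'+k+1)/2}\sqrt{k+1+\log(1/\delta)}+4^{k+1}(k+1+\log(1/\delta))\right).
\]
Since $n'+k+1=2r+2k+2=2n+2$, the first term is $2^{n+1}\sqrt{\cdot}=O(d\sqrt{k+\log(1/\delta)})$. The second term is $O(b^2(k+\log(1/\delta)))$. This is exactly Eq.~\eqref{eq:compiled-base-cost}.

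\textbf{Assembling $\widetilde W$ and the error bound.} Set $\widetilde W=(G\otimes I_{\mathrm{work}})\,\widetilde S\,(H_D\otimes I)$, where $G$ is the Clifford product of steps 3 and 4. The Clifford parts act trivially on the workspace, so they commute with $\iota$: for example $(H_D\otimes I)\iota=\iota H_D$. This gives
\[
\widetilde W\iota-\iota W=(G\otimes I)\bigl(\widetilde S\iota-\iota\SELECT\bigr)H_D .
\]
Its norm is at most $\delta$, because $G\otimes I$ and $H_D$ are unitary. This proves Eq.~\eqref{eq:compiled-W-guarantee}.

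For Eq.~\eqref{eq:actual-clean-block-bound}, use $J=\iota J_0$ and $\iota^\dagger\iota=I$ to write
\[
A-J_0^\dagger WJ_0=J_0^\dagger\iota^\dagger(\widetilde W\iota-\iota W)J_0 .
\]
The maps $J_0$ and $\iota$ are isometries, so this has norm at most $\delta$. Lemma~\ref{lem:uniform-block} identifies $J_0^\dagger WJ_0=V/\rho$, which finishes the proof.

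The main obstacle is the bookkeeping in the compiled-$\SELECT$ step. We must check that the qubit layout matches the UCU form of Lemma~\ref{lem:tan-controlled}, with the control register first. We must also check that Tan's guarantee is stated in the clean-ancilla sense of Definition~\ref{def:clean}, so that it is exactly the $\iota$-intertwining error used above. The remaining steps are routine.
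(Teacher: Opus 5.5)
Your proposal is correct and follows the paper's proof. You compile the $(2r,k+1)$-UCU $\SELECT$ with Lemma~\ref{lem:tan-controlled} at accuracy $\delta$, treat the Walsh transforms and the swap as free Cliffords, and bound the clean block via $\norm{J_0^\dagger\iota^\dagger(\widetilde W\iota-\iota W)J_0}\le\delta$ together with Lemma~\ref{lem:uniform-block}; your explicit parameter check ($n'+k'=2n+2$, $4^{k+1}=4b^2$) and the commutation of the Clifford layers through $\iota$ spell out steps the paper leaves implicit.
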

\begin{proof}
As discussed above, $W$ consists of two Walsh transforms $H_D=H^{\otimes r}$, a $(2r,k+1)$-UCU $\SELECT$ and a swap operator $\operatorname{SWAP}(\X,\B)$. Applying Lemma~\ref{lem:tan-controlled} with error \(\delta\),
we obtain a Clifford+\(T\) implementation of \(\SELECT\) using
\(
  O\!\left(
    d\sqrt{k+\log(1/\delta)}
    +b^2\bigl(k+\log(1/\delta)\bigr)
  \right)
\)
\(T\) gates and ancillas.  Since \(H_D\) and
\(\operatorname{SWAP}(\X,\B)\) are Clifford operations, adjoining them does not increase the \(T\)-count or the approximation error.
Thus the resulting circuit \(\widetilde W\) is a Clifford+\(T\)
circuit with the claimed bounds.

Now use \(J=\iota J_0\), \(\iota^\dagger\iota=I\):
\begin{align}
  \norm{A-\rho^{-1}V}
   &=\norm{
       J_0^\dagger\iota^\dagger\widetilde W\iota J_0
       -J_0^\dagger WJ_0
     }\notag\\
   &=\norm{
       J_0^\dagger\iota^\dagger
       (\widetilde W\iota-\iota W)J_0
     }\notag\\
   &\le
     \norm{J_0^\dagger}\norm{\iota^\dagger}
     \norm{\widetilde W\iota-\iota W}\norm{J_0}\notag\\
   &\le\delta.
  \label{eq:actual-clean-block-expanded}
\end{align}
All embeddings are isometries and hence have norm one.
\end{proof}

\section{$T$-count for general unitary synthesis}
\label{sec:amplification}

In this section, we first use Quantum Singular Value Transformation (QSVT) to construct a circuit for the block-flattened unitary and then use this circuit to obtain a Clifford+\(T\) implementation of a general unitary.

Put
\begin{equation}
  c:=\rho^{-1}.
  \label{eq:c-definition}
\end{equation}
In Eq.~\eqref{eq:uniform-clean-block}, since $W$ is a unitary, 
\begin{equation}
    \norm{J_0^\dagger W J_0}=\norm{\frac{V}{\rho}}=\frac{1}{\rho}=c\le 1.
\end{equation}
Moreover, $g\le 1$ implies $\rho=Dg\le D\le d$. Therefore
\begin{equation}
   0 < c \le 1,\qquad Q:=\Theta(\rho)=O(d).
\end{equation}

\subsection{Quantum circuit for the block-flattened unitary}
The next lemma constructs a
bounded odd polynomial that equals one exactly at the specified point
\(c\).  It also explains precisely which QSVT results make that polynomial implementable.

\begin{lemma}
\label{lem:exact-response}
For every \(0<c\le1\), there is an odd integer $Q=\Theta(1/c)$ and a degree-$Q$ polynomial $P(x)\in\mathbb{C}[x]$
such that $P(x)$ satisfies $P(c)=1$ and
\begin{enumerate}
        \item $P$ has parity-$(Q\mod 2)$,
        \item $\forall x\in[-1,1]$: $|P(x)|\le 1$,
        \item $\forall x\in (-\infty,-1]\cup [1,\infty)$: $|P(x)|\ge 1$,
    \end{enumerate}
\end{lemma}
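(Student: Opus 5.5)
The plan is to use a rescaled Chebyshev polynomial. Chebyshev polynomials already have all three required properties and attain the value $\pm1$ at explicit points, so the only work is to rescale the argument so that one of those points lands exactly on $x=c$.

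\textbf{Choice of degree.} Let $Q$ be the smallest odd integer with $Q\ge \pi/(2c)$. Then $Q=\Theta(1/c)$, since $c\le1$ forces $Q\ge 1$ and $Q\le \pi/(2c)+2=O(1/c)$.

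\textbf{Choice of polynomial.} Recall that $T_Q(\cos\theta)=\cos(Q\theta)$. The point $x_0:=\cos\!\bigl(\pi(Q-1)/Q\cdot\tfrac12\bigr)$ is not quite what we want; instead we take the extremum of $T_Q$ closest to the origin. Since $Q$ is odd, write $Q=2m+1$ and set $\theta_0=\frac{\pi}{2}-\frac{\pi}{2Q}$, which is $\theta=2m\pi/(2Q)\cdot\ldots$ Concretely, $Q\theta_0=m\pi$, so $T_Q(\cos\theta_0)=(-1)^m$.

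Put $s:=\cos\theta_0=\sin(\pi/(2Q))\in(0,1]$. The choice $Q\ge \pi/(2c)$ gives $\pi/(2Q)\le c$, and since $\sin t\le t$, we get $s\le c$. Now define
\begin{equation}
  P(x):=(-1)^m\,T_Q\!\left(\frac{s}{c}\,x\right).
\end{equation}
Then $P$ has degree $Q$, has odd parity (because $T_Q$ is odd for odd $Q$), and satisfies $P(c)=(-1)^mT_Q(s)=1$.

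\textbf{Verifying the bounds.} For property 2, note that $|sx/c|\le 1$ whenever $|x|\le 1$, because $s\le c$; hence $|P(x)|\le1$ on $[-1,1]$.

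Property 3 is the main obstacle. When $s<c$ the rescaling shrinks the argument, and $|x|\ge1$ no longer guarantees $|sx/c|\ge1$. For instance, $P(1)=\pm T_Q(s/c)$ may have modulus less than $1$. A pure Chebyshev rescaling therefore fails unless $s=c$ exactly, that is, unless $c=\sin(\pi/(2Q))$ for some odd $Q$, and that does not hold in general.

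\textbf{Planned fix.} Use the freedom in $Q$ together with a different evaluation point. Write $c=\cos\phi$ with $\phi\in[0,\pi/2)$. Choose an odd $Q=\Theta(1/c)$ and an integer $j$ such that $\phi=j\pi/Q$ exactly; this cannot always be arranged. So instead I would pass to a general polynomial: take $P(x)=T_Q(\lambda x)/T_Q(\lambda c)$ with $\lambda\ge1$ chosen so that $\lambda c$ is a point where $|T_Q|=1$ and $\lambda\ge1$. Choose the root-free extremum $\lambda c=\cos(j\pi/Q)$ with $\cos(j\pi/Q)\ge c$, taking the smallest such value. This requires $\cos(j\pi/Q)$ values spaced finely enough near $c$, which holds once $Q\gtrsim 1/c$. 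Then:
\begin{itemize}
  \item $\lambda\ge 1$ gives $|\lambda x|\ge 1$ for $|x|\ge1$, so $|P(x)|\ge1$ there, which is property 3;
  \item $P(c)=\pm1$, which a sign adjusts to $1$;
  \item property 2 requires $\lambda\le 1$, i.e.\ $\lambda c\le c$.
\end{itemize}
Properties 2 and 3 together force $\lambda=1$, so $c$ must be an extremum of $T_Q$, and that again fails generically.

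The conclusion is that a pure Chebyshev rescaling only works when $c$ is exactly an extremum of $T_Q$. For general $c$, I would expect to need a degree-$Q$ polynomial built by a phase-factor construction: an interpolation argument with $|P|\le1$ on $[-1,1]$ and $P(c)=1$ at an interior point. Property 3 would then follow from the standard fact that a polynomial bounded by $1$ on $[-1,1]$ and attaining $|P|=1$ at its $Q$-fold equioscillation nodes grows outside the interval. Controlling this growth rigorously is the main obstacle.
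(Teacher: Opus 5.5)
There is a genuine gap: your proposal never establishes property~3. You correctly show that $p_0(x)=(-1)^mT_Q(sx/c)$ has odd parity, satisfies $|p_0|\le1$ on $[-1,1]$ (because $s=\sin(\pi/(2Q))\le c$), and has $p_0(c)=1$. You also correctly show that it generally violates property~3, since $|p_0(1)|=|T_Q(s/c)|<1$ unless $c$ is an extremum of $T_Q$. Your fallback, an unspecified ``phase-factor'' or interpolation construction whose growth outside $[-1,1]$ you would still have to control, is not an argument.

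The missing idea is that the lemma asks only for $P\in\mathbb{C}[x]$. The obstruction you found concerns real rescaled Chebyshev polynomials, and it disappears once $P$ may have an imaginary part. The paper uses exactly your $p_0$, with $Q$ the smallest odd integer satisfying $\sin(\pi/(2Q))\le c$. Your choice, the smallest odd $Q\ge\pi/(2c)$, works equally well and still gives $1/c\le Q\le\pi/(2c)+2$. The paper, however, uses $p_0$ only as a \emph{real part}. By the completion lemma of Gily\'en et al.\ (Lemma~\ref{lem:realpoly-to-complexpoly}), any real degree-$Q$ polynomial of parity $Q\bmod 2$ bounded by $1$ on $[-1,1]$ equals $\operatorname{Re}P$ for some degree-$Q$ complex polynomial $P$. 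This $P$ has the same parity, satisfies $|P|\le1$ on $[-1,1]$, and satisfies $|P|\ge1$ outside $[-1,1]$.

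Property~3 comes for free because of how $P$ is built. Write $P=p_0+\mathrm{i}B$ with $B$ a real polynomial. The polynomial $1-p_0^2$ is nonnegative on $[-1,1]$, and the completion writes it as $B(x)^2+(1-x^2)|R(x)|^2$ for some polynomial $R$. Hence $|P(x)|^2+(1-x^2)|R(x)|^2=1$ holds identically on $\mathbb{R}$, and so $|P(x)|^2=1+(x^2-1)|R(x)|^2\ge1$ for $|x|\ge1$. The imaginary part $B$ is exactly what repairs the deficit you found at $x=1$.

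Finally, you get $P(c)=1$ exactly, not merely $\operatorname{Re}P(c)=1$. Property~2 gives $|P(c)|\le1$, and together with $\operatorname{Re}P(c)=p_0(c)=1$ this forces $B(c)=0$. Adding this one complexification step completes your construction as written.
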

\begin{proof}
    See the proof in Appendix \ref{append:proof-restate}.
\end{proof}


Recall the QSVT theorem of Gily{\'e}n et al.~\cite[Theorem~17]{Gilyen2019}.
Let \(a_{\mathrm{work}}\) be the number of compiler-workspace qubits and
set \(a:=r+1+a_{\mathrm{work}}\), counting \(\X,\f\), and that workspace.
The signal projector is
\begin{equation}\label{eq:projector}
  \Pi=JJ^\dagger,
\end{equation}
whose range is the subspace in which these \(a\) signal ancillas are all
zero.  For the lifted ideal query \(\widehat W=W\otimes I_{\mathrm{work}}\)
defined above,
\begin{equation}
  J^\dagger\widehat WJ=cV,
  \qquad
  \Pi\widehat W\Pi=J(cV)J^\dagger.
\end{equation}
For the degree-\(Q\) odd polynomial \(P\) from
Lemma~\ref{lem:exact-response}, the QSVT theorem supplies
\(\Phi=(\phi_1,\ldots,\phi_Q)\in\mathbb R^Q\) such that
\[
  P^{(\mathrm{SV})}\!\left(J(cV)J^\dagger\right)
  =P^{(\mathrm{SV})}(\Pi\widehat W\Pi)
  =\Pi U_\Phi\Pi
  =JVJ^\dagger,
\]
where, with the noncommuting product ordered in increasing \(j\),
\begin{equation}\label{eq:QSVT-circuit}
  U_\Phi=e^{\ii\phi_1(2\Pi-I)}\widehat W
  \prod_{j=1}^{(Q-1)/2}
  e^{\ii\phi_{2j}(2\Pi-I)}\widehat W^\dagger
  e^{\ii\phi_{2j+1}(2\Pi-I)}\widehat W.
\end{equation}
For any $j$, $e^{\ii \phi_i(2\Pi-I)}$ is called the signal rotations.
Thus
\begin{equation}
    P^{(\mathrm{SV})}(cV)=V,
\end{equation}
Since \(U_\Phi\) is unitary and its
clean compression is the unitary \(V\), no amplitude leaks from the clean
input subspace; hence 
\begin{equation}
    U_\Phi J=JV.
\end{equation}
Then $U_{\Phi}$ can implement unitary $V$ exactly.

In the next lemma, \(\mathcal W\) is any \emph{exact} unitary, later instantiated as the exact Clifford+T circuit \(\widetilde W\).  The only approximation
hypothesis concerns its clean block \(A=J^\dagger\mathcal WJ\).

\begin{lemma}
\label{lem:robust-amplification}
Let \(\mathcal W\) be a unitary, let \(J\) be an isometry, and put
\begin{equation}
  A=J^\dagger\mathcal WJ.
  \label{eq:robust-A-definition}
\end{equation}
Suppose \(V\) is unitary and
\begin{equation}
  \norm{A-cV}\le\delta,
  \qquad
  0<c\le1,
  \qquad
  \delta<c/2.
  \label{eq:robust-hypothesis}
\end{equation}
There is a unitary \(\mathcal A\), using
\(Q=\Theta(1/c)\) total calls, alternating between \(\mathcal W\) and
\(\mathcal W^\dagger\), and \(O(Q)\) signal rotations, whose clean top block
\begin{equation}
  A_{\mathrm{amp}}:=J^\dagger\mathcal AJ
  \label{eq:Aamp-definition}
\end{equation}
satisfies
\begin{equation}
  \norm{A_{\mathrm{amp}}-V}=O(Q^2\delta).
  \label{eq:amplified-top-error}
\end{equation}
If \(Q^2\delta=O(1)\), then the full
clean input-isometry error is
\begin{equation}
  \norm{\mathcal AJ-JV}=O(Q\sqrt\delta).
  \label{eq:amplified-isometry-error}
\end{equation}
In particular, if \(\delta=0\), \(\mathcal A\) implements \(V\) exactly
and all signal ancillas return to zero.
\end{lemma}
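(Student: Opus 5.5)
Take $\mathcal A:=U_\Phi$ from Eq.~\eqref{eq:QSVT-circuit}, with $\widehat W$ replaced by $\mathcal W$ and the same phases $\Phi$ obtained for the polynomial $P$ of Lemma~\ref{lem:exact-response} (degree $Q=\Theta(1/c)$, odd, $P(c)=1$, $|P|\le1$ on $[-1,1]$). By construction, $\mathcal A$ uses $Q$ alternating calls to $\mathcal W,\mathcal W^\dagger$ and $Q$ signal rotations $e^{\ii\phi(2\Pi-I)}$ with $\Pi=JJ^\dagger$. The plan has three steps: identify the clean block through QSVT, bound its deviation from $V$ by a polynomial perturbation argument, and then convert that top-block bound into an isometry bound using unitarity.

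\emph{Step 1: clean block via QSVT.} Since the phase sequence realizing $P$ depends only on $P$, the QSVT theorem~\cite[Theorem~17]{Gilyen2019} applied to $\mathcal W$ gives
\[
J^\dagger\mathcal AJ=P^{(\mathrm{SV})}(A).
\]
Because $P$ is odd, this equals $A\,p(A^\dagger A)$, where $P(x)=x\,p(x^2)$. The same identity for $cV$ gives $P^{(\mathrm{SV})}(cV)=V\,p(c^2)\cdot c=P(c)V=V$. Hence
\[
\norm{A_{\mathrm{amp}}-V}=\norm{P^{(\mathrm{SV})}(A)-P^{(\mathrm{SV})}(cV)}.
\]

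\emph{Step 2: perturbation bound.} Expand $P(x)=\sum_j a_jx^{2j+1}$, so that $P^{(\mathrm{SV})}(A)=\sum_j a_jA(A^\dagger A)^j$. A crude telescoping bound with coefficient sizes is too weak, because the coefficients of a bounded polynomial may be exponentially large. I would therefore use a Lipschitz bound for singular value transformations of contractions. Both $A$ and $cV$ are contractions: $\norm A\le1$ since $\mathcal W$ is unitary and $J$ is an isometry. The known result is that for a degree-$Q$ polynomial bounded by $1$ on $[-1,1]$,
\[
\norm{P^{(\mathrm{SV})}(A)-P^{(\mathrm{SV})}(B)}\le O(Q)\,\norm{A-B}^{1/2}\quad\text{(Gily\'en et al., Lemma~22/23)},
\]
or the cruder $O(Q^2)\norm{A-B}$ via Markov's inequality $\sup|P'|\le Q^2$. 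I plan the latter route. Write $P^{(\mathrm{SV})}(A)$ as $F(\widetilde A)$, where $F$ is a polynomial applied to the Hermitian dilation $\widetilde A=\begin{pmatrix}0&A\\A^\dagger&0\end{pmatrix}$, whose spectrum lies in $[-1,1]$. Then apply the operator-Lipschitz estimate for polynomials on Hermitian matrices, using Markov's inequality for the derivative. This gives $\norm{F(\widetilde A)-F(\widetilde B)}=O(Q^2)\norm{\widetilde A-\widetilde B}$, where the constant may pick up a logarithmic factor. I expect this to be the main obstacle. Handling it cleanly may require instead a direct telescoping over the $Q$ factors of $U_\Phi$, comparing $\mathcal A$ with the ideal product built from a unitary dilation of $cV$. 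The hypothesis $\delta<c/2$ ensures $A$ has singular values near $c$, away from $0$, which keeps such local arguments valid.

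\emph{Step 3: isometry error.} Put $E:=\mathcal AJ-JV$ and decompose it as
\[
E=\Pi E+(I-\Pi)E=J(A_{\mathrm{amp}}-V)+(I-\Pi)\mathcal AJ.
\]
For the leakage term, unitarity of $\mathcal A$ gives
\[
\norm{(I-\Pi)\mathcal AJ}^2=\norm{I-A_{\mathrm{amp}}^\dagger A_{\mathrm{amp}}}\le 2\norm{A_{\mathrm{amp}}-V}+\norm{A_{\mathrm{amp}}-V}^2=O(Q^2\delta),
\]
using $V^\dagger V=I$ and $Q^2\delta=O(1)$. Combining the two terms yields $\norm{\mathcal AJ-JV}=O(Q^2\delta)+O(Q\sqrt\delta)=O(Q\sqrt\delta)$, since $Q^2\delta\le Q\sqrt\delta$ whenever $Q^2\delta=O(1)$. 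When $\delta=0$, both terms vanish, so $\mathcal AJ=JV$ exactly and the signal ancillas return to zero, as already shown in the discussion preceding the lemma.
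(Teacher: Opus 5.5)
\textbf{Gap in Step 2.} Your Steps 1 and 3 match the paper: QSVT identifies $J^\dagger\mathcal AJ=P^{(\mathrm{SV})}(A)$, and unitarity gives $\norm{(I-\Pi)\mathcal AJ}^2=\norm{I-A_{\mathrm{amp}}^\dagger A_{\mathrm{amp}}}$, which turns a top-block error $e$ into an isometry error $e+O(\sqrt e)$. The gap is Step 2, the only place where the quantitative claim $\norm{A_{\mathrm{amp}}-V}=O(Q^2\delta)$ is earned, and you do not establish it.

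Markov's inequality bounds only the scalar Lipschitz constant $\sup_{[-1,1]}|P'|\le Q^2$. For non-commuting Hermitian $X,Y$, however, $\norm{F(X)-F(Y)}$ is governed by the Schur-multiplier norm of the divided differences $(F(s)-F(t))/(s-t)$, and a bound on $\sup|F'|$ does not control that norm in general. What you would need is a genuine operator version of Markov's inequality, which you neither prove nor cite. You also concede a possible logarithmic loss, and any such loss already breaks the stated $O(Q^2\delta)$.

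The bound $O(Q)\norm{A-B}^{1/2}$ you mention is too weak. It gives only $O(Q\sqrt\delta)$ for the top block, and then only $O(Q^{1/2}\delta^{1/4})$ for the leakage, not $O(Q\sqrt\delta)$. The telescoping fallback is only gestured at. To make it work you would have to exhibit a unitary with clean block exactly $cV$ that is suitably close to $\mathcal W$, or to a dilation of $A$, and you give no such construction.

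\textbf{The missing idea.} The comparison point $cV$ is a scalar multiple of a unitary, so $|cV|=cI$ is scalar and no operator-Lipschitz theory is needed. The paper proceeds as follows.
\begin{enumerate}
\item Weyl's inequality gives $|\sigma_j(A)-c|\le\delta$. Since $\delta<c/2$, $A$ is invertible, so its polar decomposition $A=\Omega|A|$ has a unitary factor $\Omega$.
\item Oddness of $P$ gives $A_{\mathrm{amp}}=P^{(\mathrm{SV})}(A)=\Omega P(|A|)$.
\item Because $|A|$ commutes with $cI$, $\norm{P(|A|)-I}=\max_j|P(\sigma_j(A))-P(c)|\le Q^2\delta$. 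This uses the one-variable mean-value estimate together with Markov's inequality, applied to $\operatorname{Re}(e^{-\ii\theta}P)$ to handle complex $P$.
\item The exact identity $c(I-\Omega^\dagger V)=(cI-|A|)+\Omega^\dagger(A-cV)$ gives $\norm{\Omega-V}\le 2\delta/c\le 2Q\delta$.
\item Combining, $\norm{A_{\mathrm{amp}}-V}\le Q^2\delta+2\delta/c\le 3Q^2\delta$.
\end{enumerate}
With this bound in hand, your Step 3 completes the proof as written. The same scalar observation is also what would rescue a telescoping argument: the Julia--Halmos defect blocks of $cV$ are the scalar $\sqrt{1-c^2}\,I$, so their distance to those of $A$ reduces to $\max_j\bigl|\sqrt{1-\sigma_j(A)^2}-\sqrt{1-c^2}\bigr|$.
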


\begin{proof}
Let \(P\) be the degree-\(Q\) response from
Lemma~\ref{lem:exact-response}.  
Circuit $\mathcal{A}$ is obtained by replacing $\widehat W$ by $\mathcal{ W}$ in Eq.~\eqref{eq:QSVT-circuit}, using the same clean projector \(\Pi=JJ^\dagger\). 

Write a singular-value decomposition and polar decomposition of \(A\)
as
\begin{equation}
  A=U_A\Sigma R_A^\dagger=\Omega|A|,
  \label{eq:A-polar-SVD}
\end{equation}
where
\begin{equation}
    \Omega=U_AR_A^\dagger \qquad \text{and} \qquad
  |A|=R_A\Sigma R_A^\dagger.
\end{equation}
Every singular value of \(cV\) is \(c\). Therefore
Eq.~\eqref{eq:robust-hypothesis} gives the following bounds on the singular values of \(A\)
\begin{equation}
  |\sigma_j(A)-c|\le\delta
  \label{eq:singular-values-near-c}
\end{equation}
for every \(j\).  Since \(\delta<c/2\), \(A\) is invertible and
\(\Omega\) is unitary.  Also, because \(cI\) is scalar,
\begin{equation}
  \norm{|A|-cI}
  =\max_j|\sigma_j(A)-c|
  \le\delta.
  \label{eq:absolute-perturbation}
\end{equation}

For an odd response polynomial $P$, QSVT gives
\begin{equation}
  A_{\mathrm{amp}}=P^{(\mathrm{SV})}(A)
   =U_AP(\Sigma)R_A^\dagger
   =\Omega P(|A|).
  \label{eq:qsvt-polar-form}
\end{equation}

We next bound the distance between \(A_{\mathrm{amp}}\) and \(V\).
{ First, \(A=\Omega|A|\) implies the exact identity
\begin{equation}
  c(I-\Omega^\dagger V)
  =(cI-|A|)+\Omega^\dagger(A-cV).
  \label{eq:polar-identity}
\end{equation}
Both terms on the right have norm at most \(\delta\), so
\begin{equation}
  \norm{\Omega-V}
  =\norm{I-\Omega^\dagger V}
  \le\frac{2\delta}{c}.
  \label{eq:polar-perturbation}
\end{equation}

Second, fix \(x\in[-1,1]\), choose
\(\theta_x=\arg P'(x)\), and set
\(q_x(t):=\operatorname{Re}(e^{-\ii\theta_x}P(t))\).  Then \(q_x\) is
a real polynomial bounded by one on \([-1,1]\), so the real Markov
inequality gives \(\max_{x\in[-1,1]}|P'(x)|=|q_x'(x)|\le Q^2\).  Moreover, because
\(A\) is a compression of a unitary, \(0\le\sigma_j(A)\le1\); hence
the integration interval below lies in \([-1,1]\).
Consequently,
{\begin{align}
  |P(\sigma_j(A))-P(c)|
  &=
  \left|\int_c^{\sigma_j(A)}P'(x)\,dx\right| \notag\\
  &\le
  \max_{t\in[-1,1]}|P'(t)|
  |\sigma_j(A)-c| \notag\\
  &\le Q^2\delta .
\end{align}}
Using \(P(c)=1\), Eq.~\eqref{eq:singular-values-near-c}, and the preceding derivative bound,
\begin{align}
  \norm{P(|A|)-I}
   &=\max_j|P(\sigma_j(A))-P(c)|
    \le Q^2\delta.
  \label{eq:polynomial-perturbation}
\end{align}}
Combining Eqs.~\eqref{eq:qsvt-polar-form},
\eqref{eq:polar-perturbation}, and
\eqref{eq:polynomial-perturbation},
\begin{align}
  \norm{A_{\mathrm{amp}}-V}
   &\le\norm{\Omega(P(|A|)-I)}+\norm{\Omega-V}\notag\\
   &\le Q^2\delta+\frac{2\delta}{c}
    \le3Q^2\delta,
  \label{eq:top-error-expanded}
\end{align}
where the last inequality uses \(Q\ge1/c\) from
Eq.~\eqref{eq:Q-explicit-bounds}.  This proves
Eq.~\eqref{eq:amplified-top-error}.

It remains to control ancilla leakage.  Define
\begin{equation}
  E:=(I-JJ^\dagger)\mathcal AJ.
  \label{eq:leakage-operator}
\end{equation}
Then
\begin{equation}
  \mathcal AJ=JA_{\mathrm{amp}}+E,
  \label{eq:A-decomposition-clean-leakage}
\end{equation}
and the two terms on the right have orthogonal ranges.  Unitarity of
\(\mathcal A\) gives
\begin{align}
  E^\dagger E
   &=J^\dagger\mathcal A^\dagger(I-JJ^\dagger)\mathcal AJ\notag\\
   &=I-A_{\mathrm{amp}}^\dagger A_{\mathrm{amp}}.
  \label{eq:leakage-identity}
\end{align}
Put \(e=\norm{A_{\mathrm{amp}}-V}\).  Since \(A_{\mathrm{amp}}\) is a
compression of a unitary, \(\norm{A_{\mathrm{amp}}}\le1\).  Using
\(V^\dagger V=I\),
\begin{align}
  \norm{E}^2
   &=\norm{I-A_{\mathrm{amp}}^\dagger A_{\mathrm{amp}}}\notag\\
   &=\norm{V^\dagger(V-A_{\mathrm{amp}})
      +(V^\dagger-A_{\mathrm{amp}}^\dagger)A_{\mathrm{amp}}}\notag\\
   &\le e+e\norm{A_{\mathrm{amp}}}
    \le2e.
  \label{eq:leakage-from-top-error}
\end{align}
Finally,
\begin{align}
  \norm{\mathcal AJ-JV}
   &\le\norm{J(A_{\mathrm{amp}}-V)}+\norm{E}\notag\\
   &\le e+\sqrt{2e}
    =O(Q\sqrt\delta)
  \label{eq:clean-error-final-robust}
\end{align}
when \(Q^2\delta=O(1)\).  If \(\delta=0\), then \(e=0\) and
\(E=0\), proving exact logical action and exact ancilla return.
\end{proof}

\subsection{Clifford+$T$ circuit for the general unitary}
In this section, we show the $T$-count for the general unitary.

First we implement the signal rotation
\begin{equation}
 e^{\ii\phi(2\Pi-I)}
  \label{eq:centered-signal-phase}
\end{equation}
in the QSVT circuit Eq.~\eqref{eq:QSVT-circuit}.

\begin{lemma}
\label{lem:signal-phase}
Suppose an alternating sequence contains \(M\) known signal rotations
\(e^{\ii\phi_1(2\Pi-I)},\ldots,e^{\ii\phi_M(2\Pi-I)}\).  They can be implemented total error at most $\eta$ using
\begin{equation}
  O\!\left(M[a+\log (M/\eta)]\right)
  \label{eq:signal-phase-cost}
\end{equation}
T gates and $O(a)$ ancillas.
\end{lemma}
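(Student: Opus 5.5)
The plan is to reduce each signal rotation to a single-qubit $R_z$ rotation conjugated by a multi-controlled Toffoli, and then split the total error budget evenly across the $M$ rotations. Since $\Pi=JJ^\dagger$ is the projector onto the subspace where all $a$ signal ancillas are zero, $2\Pi-I$ acts as $+1$ on $\ket{0^a}$ and $-1$ on its orthogonal complement in the signal register, tensored with the identity elsewhere. Hence
\[
e^{\ii\phi(2\Pi-I)}=\bigl(e^{\ii\phi}\proj{0^a}+e^{-\ii\phi}(I-\proj{0^a})\bigr)\otimes I.
\]

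To realize this operator, introduce one fresh ancilla qubit $\A$ initialized to $\ket0$. First apply $X^{\otimes a}$ to the signal ancillas. Then apply an $(a+1)$-qubit generalized Toffoli that flips $\A$ exactly when all signal ancillas were zero, and undo the $X$ gates. Next apply the single-qubit rotation $R_z(-2\phi)=\diag(e^{\ii\phi},e^{-\ii\phi})$ to $\A$, and finally uncompute $\A$ with the same Toffoli sandwich. On $\ket{0^a}$ the ancilla is $\ket1$ and picks up $e^{-\ii\phi}$; otherwise it is $\ket0$ and picks up $e^{\ii\phi}$. This gives the desired operator up to a sign $\phi\mapsto-\phi$, which is fixed by negating the angle, and $\A$ returns to $\ket0$. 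In the worst case one also tracks a global phase, which is harmless because it is a uniform phase on the whole space and can be absorbed into the angle via the identity $e^{\ii\phi(2\Pi-I)}=e^{-\ii\phi}e^{2\ii\phi\Pi}$, equivalent up to global phase. By the Toffoli lemma, each generalized Toffoli costs $O(a)$ $T$ gates and $O(a)$ ancillas, all returned clean, and these ancillas are reused across all $M$ rotations.

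Next, implement each $R_z$ to accuracy $\eta/M$ using Lemma~\ref{lem:Rz-Tcount}, at a cost of $O(\log(M/\eta))$ $T$ gates. Because the Clifford and Toffoli parts are exact and unitaries preserve norms, replacing the $M$ exact rotations by approximations in the product accumulates error at most $\sum_j \eta/M=\eta$; this is the standard telescoping (hybrid) bound for products of unitaries, and interleaving with the fixed unitaries $\mathcal W,\mathcal W^\dagger$ does not affect it. The total cost is $M\cdot O(a+\log(M/\eta))$ $T$ gates and $O(a)$ ancillas.

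The main subtlety is the error notion. The approximate $R_z$ acts only on $\A$, so its operator-norm error transfers directly to the full rotation when $\A$ starts in $\ket0$ and is uncomputed. One has to check that this is still valid when errors leak $\A$ out of $\ket0$. I would handle this by bounding the full unitary difference in operator norm on the whole space, including the ancilla, rather than on a clean subspace; that sidesteps the leakage issue entirely.
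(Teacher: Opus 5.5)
Your proposal is correct and matches the paper's proof: each signal rotation is two $O(a)$-qubit generalized Toffolis (with $X$-conjugation to detect $\ket{0^a}$) around a single $R_z$ on an extra ancilla, each $R_z$ is approximated to error $\eta/M$ with $O(\log(M/\eta))$ $T$ gates, and the errors add by the telescoping bound, giving $O(M[a+\log(M/\eta)])$ $T$ gates and $O(a)$ reusable ancillas. Your full-space operator-norm treatment of ancilla leakage is a useful detail the paper leaves implicit; your global-phase remark is unnecessary, because once the angle sign is fixed the construction gives exactly $e^{\ii\phi(2\Pi-I)}$ on the clean-ancilla subspace with no stray phase (a phase that would otherwise matter under the phase-sensitive criterion of Definition~\ref{def:clean}).
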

\begin{proof}
    Recall that $\Pi=JJ^\dagger=\ket{0^a}\bra{0^a}$ (Eq.~\eqref{eq:projector}). Ref.~\cite{Gilyen2019} shows that $e^{\ii\phi_M(2\Pi-I)}$ can be decomposed into two $O(a)$-qubit Toffoli gates and one Rz gate. The Toffoli gate can be implemented exactly using $O(a)$ $T$ gates and ancillas. And each Rz gate can be approximated using $O(\log(M/\eta))$ $T$ gates up to error $\eta/M$ by Lemma \ref{lem:Rz-Tcount}. Then the total error is $\eta$ and the total $T$-count is $O\!\left(M[a+\log (M/\eta)]\right)$.
\end{proof}



Now we show the $T$-count and ancilla-count for the general unitary.
\begin{theorem}[Restatement of Theorem \ref{thm:main}]
\label{thm:main-restate}
Let \(d=2^n\), let \(0<\eps<1/2\), and set \(L:=n+\log(1/\eps)\).  For any \(U\in\U(d)\), there exists a Clifford+\(T\) circuit that implements \(U\) with error at most \(\eps\) in the sense of Definition~\ref{def:clean}.  If \(L\le d\), the circuit has \(T\)-count
\begin{equation}
  O\!\left(d^{5/4}L^{5/8}\log d\right)
\end{equation}
 and uses \(O(d\sqrt L)\) clean ancillas.  
\end{theorem}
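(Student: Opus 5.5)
The plan is to assemble the pieces already established. Given \(U\), apply Lemma~\ref{lem:flatten} with a block size \(b=2^k\) to be chosen, obtaining Boolean phase oracles \(D_1,D_2\) and the block-flattened unitary \(V=H_dD_1UD_2H_d\). Then \(U=D_1H_dVH_dD_2\), using \(D_i^{-1}=D_i\) and \(H_d^{-1}=H_d\). Hence it suffices to implement \(V\) to error \(\eps\). The outer layers cost \(O(\sqrt d)\) \(T\) gates and ancillas exactly, by Lemma~\ref{lem:boolean-phase}, and add no error. Errors compose additively in the sense of Definition~\ref{def:clean}, since exact unitaries on the logical register commute with \(J\).

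To implement \(V\), I would build the compiled block encoding \(\widetilde W\) of Lemma~\ref{lem:compiled-base} with precision \(\delta\), so that \(\norm{A-cV}\le\delta\) with \(c=1/\rho=1/(Dg)\). I would then feed \(\mathcal W=\widetilde W\) into Lemma~\ref{lem:robust-amplification}, using \(Q=\Theta(\rho)\) queries. This gives \(\norm{\mathcal AJ-JV}=O(Q\sqrt\delta)\). The signal rotations of Lemma~\ref{lem:signal-phase}, with \(M=O(Q)\), are compiled to error \(\eps/3\). Choosing \(\delta=\Theta(\eps^2/Q^2)\) makes the amplified error at most \(\eps/3\). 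It also guarantees both \(\delta<c/2\) and \(Q^2\delta=O(1)\), because \(\eps<1/2\) and \(Q\ge 1/c\). Since \(Q\le d\), we have \(\log(1/\delta)=O(n+\log(1/\eps))=O(L)\), and so \(k+\log(1/\delta)=O(L)\).

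The \(T\)-count is
\[
O\bigl(Q(d\sqrt L+b^2L)\bigr)+O\bigl(Q(a+\log(Q/\eps))\bigr)+O(\sqrt d).
\]
Here the workspace size satisfies \(a=O(d\sqrt L+b^2L)\), so the rotation term is dominated by the first term. In the regime where \(g=C_{\mathrm{flat}}\sqrt{b/d}\log(2d)<1\), we have \(Q=O(\sqrt{d/b}\log d)\). Take \(b\) to be a power of two with \(b=\Theta(d^{1/2}L^{-1/4})\). Then \(d\sqrt L=\Theta(b^2L)\), and the total becomes \(O\bigl(d^{1/4}L^{1/8}\log d\cdot d\sqrt L\bigr)=O(d^{5/4}L^{5/8}\log d)\).

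I must also check the side conditions of the lemmas. We need \(1\le k<n\), so that Lemma~\ref{lem:compiled-base} applies; this holds since \(L\le d\) gives \(b\ge d^{1/4}\ge 2\) for \(n\) large, with small \(n\) absorbed into constants. If instead \(g=1\), then \(\rho=D\) and the bound only improves, because \(D\le\sqrt{d/b}\log d\) up to constants in that case. The ancilla count is dominated by the UCU compiler, giving \(O(d\sqrt L+b^2L)=O(d\sqrt L)\), plus \(r+1\), \(O(\sqrt d)\), and the Toffoli ancillas. These are reused across the \(Q\) calls, because Lemma~\ref{lem:compiled-base} returns the workspace clean up to \(\delta\), and that leakage is already counted within \(\norm{\mathcal AJ-JV}\).

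The main obstacle I expect is the error bookkeeping. The amplification converts a query error \(\delta\) into \(O(Q\sqrt\delta)\) rather than \(O(Q\delta)\), so \(\delta\) must be taken quadratically small. I need to confirm this costs only a constant factor inside \(L\). I also need to confirm that the signal-rotation approximation errors, which occur between queries, add linearly by the triangle inequality in the lifted operator \(\mathcal A\). To handle this, I would view the compiled circuit as the exact unitary \(\mathcal A\) with each rotation replaced by an \(\eta/M\)-close unitary, bound the difference by a telescoping sum, and then apply Lemma~\ref{lem:robust-amplification} to the ideal-rotation circuit.
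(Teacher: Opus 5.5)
Your proposal is correct and follows the paper's proof essentially step for step. You undo the flattening with the two Boolean phase oracles, compile $\widetilde W$ to precision $\delta=\Theta(\eps^2/Q^2)$ so that Lemma~\ref{lem:robust-amplification} gives clean-input error $O(Q\sqrt\delta)$, absorb the signal-rotation cost via $a=O(d\sqrt L+b^2L)$ and $\log(Q/\eps)=O(L)$, and balance at $b=\Theta(d^{1/2}L^{-1/4})$. Your extra checks (the range $1\le k<n$, the case $g=1$, and the telescoping bound for the approximate rotations) are details the paper leaves implicit, and they go through as you describe.
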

\begin{proof}
  For any $U\in\U(d)$, by Lemma \ref{lem:flatten}, there exist two Boolean phase oracles $D_1,D_2$ such that
\begin{equation}
  U=D_1H_d\,V\,H_dD_2,
  \label{eq:undo-flattening}
\end{equation}
and $V$ satisfies Eq.~\eqref{eq:block-flat}.
The Walsh transforms are Clifford operations.  By
Lemma~\ref{lem:boolean-phase}, the two Boolean phase oracles can be implemented exactly using \(O(\sqrt d)\) \(T\) gates and \(O(\sqrt d)\) clean ancillas in total.  We next construct the circuit for \(V\).

We now apply Lemma~\ref{lem:robust-amplification} to the exact compiled unitary \(\mathcal W=\widetilde W\).  
By Eq.~\eqref{eq:QSVT-circuit} and Lemma~\ref{lem:robust-amplification}, the QSVT circuit for \(V\) contains \(O(Q)=O(\rho)\) calls in total to \(\widetilde W\) and \(\widetilde W^\dagger\), together with \(O(Q)\) signal rotations \(e^{\ii\phi_j(2\Pi-I)}\).

Lemma~\ref{lem:compiled-base} shows
\begin{equation}
  \norm{J^\dagger\widetilde WJ-cV}\le\delta.
  \label{eq:compiled-block-ready-for-amp}
\end{equation}
For any $\eps\in(0,1)$, choose
\begin{equation}
  \delta=\frac{\eps^2}{C\cdot Q^2}
  =\Theta\!\left(\frac{\eps^2}{\rho^2}\right)
  \label{eq:delta-choice}
\end{equation}
for some sufficiently large constant $C>0$. Then $Q^2\delta=\eps^2/C$ is at most a sufficiently small constant and $\delta=\eps^2/(CQ^2)\le \eps^2c^2/C<c/2$, where the first inequality uses \(Q\ge1/c\); this satisfies the condition in Lemma \ref{lem:robust-amplification}.
For sufficiently large \(C\), Lemma~\ref{lem:robust-amplification} implies that the QSVT circuit built from the unitary \(\widetilde W\) and its inverse has error at most \(\eps/2\).
Using \(L:=n+\log(1/\eps)\) as defined in the theorem, set
\begin{equation}
  S_0:= d\sqrt L+b^2L.
  \label{eq:S0-definition}
\end{equation}
Lemma~\ref{lem:compiled-base} and Eq.~\eqref{eq:delta-choice} show that one
query to \(\widetilde W\) or \(\widetilde W^\dagger\) costs 
 {\begin{align}
    &O\left(d\sqrt{k+\log(1/\delta)}+b^2(k+\log(1/\delta))\right)\notag\\
    =&O\left(d\sqrt{k+\log(1/\eps)+\log(Q)}+b^2(k+\log(1/\eps)+\log(Q))\right)\notag\\
    =&O(d\sqrt{L}+b^2L)\notag\\
    =&O(S_0)
\end{align}}
$T$ gates and ancillas.  Across all \(O(Q)\) queries the \(T\)-count and number of reusable clean ancillas are, respectively,
\begin{equation}\label{eq:W}
    O(\rho S_0) \qquad \text{and}\qquad O(S_0).
\end{equation}

Each signal predicate checks \(\X,\f\), and the compiler-workspace qubits.  Thus \(a=r+1+a_{\mathrm{work}}=O(S_0)\). The circuit in Eq.~\eqref{eq:QSVT-circuit} contains \(M=O(Q)\) signal rotations.  Moreover,
\(Q=O(d)\) and \(\eta=\Theta(\eps)\), so
\begin{equation}
  \log (M/\eta)
  =O(n+ {\log(1/\eps)})=O(L).
  \label{eq:signal-log-is-L}
\end{equation}
 {The QSVT circuit has clean-input error at most \(\eps/2\) by Lemma~\ref{lem:robust-amplification}. Set \(\eta=\eps/2\). By Lemma~\ref{lem:signal-phase}, approximating the \(M\) signal rotations introduces error at most
\(\eps/2\).}
Lemma~\ref{lem:signal-phase} shows that when $L\le d$, the total $T$-count and ancilla-count for all signal rotations are
\begin{equation}
  O(\rho (S_0+L))=O(\rho S_0),
  \qquad \text{and} \qquad
  O(S_0).
  \label{eq:amplification-resources}
\end{equation}
Under the assumed regime \(L\le d\), Lemma~\ref{lem:flatten} implies
\begin{equation}
  \rho=Dg
  \le C_{\mathrm{flat}}\sqrt{\frac{d}{b}}\log(2d)
  =O\!\left(\sqrt{\frac{d}{b}}\log d\right).
  \label{eq:rho-resource-bound}
\end{equation}
Combining Eqs.~\eqref{eq:S0-definition},
\eqref{eq:W} and \eqref{eq:amplification-resources}, and adding the \(T\)-count of the two Boolean phase oracles gives the \(T\)-count for \(U\):
\begin{align}
  &O\!\left(\rho[d\sqrt L+b^2L]+\sqrt d\right)\notag\\
  =&O\!\left(
      \log d\left[
        d^{3/2}b^{-1/2}\sqrt L
        +d^{1/2}b^{3/2}L
      \right]
    \right).
  \label{eq:T-before-optimization}
\end{align}
The ancilla-count is
\begin{equation}
    O(S_0)=O(d\sqrt{L}+b^2L)
\end{equation}
Choose $b=\Theta(b_\star)$
\begin{equation}
  b_\star=d^{1/2}L^{-1/4}.
  \label{eq:b-star}
\end{equation}
Such a choice exists when \(L\le d\).
For \(L\le d\), the $T$-count and ancilla-count for general unitary $U$ are
\begin{equation}
  O\!\left(d^{5/4}L^{5/8}\log d\right)\qquad\text{and}\qquad O(d\sqrt{L}).
  \label{eq:T-optimized}
\end{equation}
 This completes the proof.
\end{proof}
When \(L>d\), we instead use the circuit construction in
\cite[Theorem~1.1]{Tan2025}.  Its \(T\)-count satisfies
\begin{equation}
  O\!\left(d^{4/3}L^{2/3}+dL\right)=O(dL),
  \label{eq:tan-high-precision-count}
\end{equation}
and its ancilla count is
\(O(d^{2/3}L^{1/3}+L)=O(L)\).

\section{Synthesis of uniformly controlled unitaries}
\label{sec:multiplexed}

This section applies the block-flattening construction to uniformly controlled unitaries (UCUs).  The additional ingredient is
that a single pair of Boolean phase oracles can flatten every member of the family simultaneously.

Let  
\begin{equation}
    M=2^m\ge2,\quad K=2^k\ge2, \quad 0<\eps<1/2.
\end{equation}
For a classically specified family \(\{U_x\}_{x=0}^{M-1}\), consider the $(m,k)$-UCU
\begin{equation}
  \mathcal U
  :=\sum_{x=0}^{M-1}\proj{x}\otimes U_x,
  \qquad U_x\in\U(K).
  \label{eq:mux-definition}
\end{equation}
Throughout this section, set
\begin{equation}
  R:=\log(2MK),
  \qquad
  \Lambda:=1+\log K+\log(1/\eps).
  \label{eq:mux-parameters}
\end{equation}
Lemma~\ref{lem:tan-controlled} shows that 
\(\mathcal U\) has $T$-count
\begin{equation}
  O\!\left(
    \sqrt M\,K\sqrt\Lambda+K^2\Lambda
  \right).
  \label{eq:direct-mux-Tan}
\end{equation}
We next reduce the quadratic target-side term by applying the block
construction jointly to the entire family.

\subsection{Simultaneous flattening}

Choose a power of two \(h\mid K\), and set \(D=K/h\).  Partition each
\(U_x\) into \(D\times D\) blocks of size \(h\times h\).  Let
\(Q_I:\C^K\to\C^h\) denote the corresponding block selector,
\begin{equation}
  Q_I\ket{J,y}=\delta_{IJ}\ket y,
  \qquad 0\le I,J<D.
  \label{eq:mux-block-selector}
\end{equation}

\begin{lemma}
\label{lem:mux-common-flatten}
There exist Boolean phase oracles \(S_1,S_2\in\U(K)\), independent of
\(x\), such that the unitary
\begin{equation}
  V_x=H_KS_1U_xS_2H_K
  \label{eq:mux-flattened-branch}
\end{equation}
satisfies
\begin{equation}
  \norm{Q_IV_xQ_J^\dagger}
  \le
  g_{\mathrm{mux}}
  :=\min\!\left\{
    1,
    C_{\mathrm{flat}}\sqrt{\frac hK}\,R
  \right\}
  \label{eq:mux-g}
\end{equation}
for every \(x,I,J\), where \(C_{\mathrm{flat}}=16\ln2\).
\end{lemma}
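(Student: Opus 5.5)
The plan is to rerun the two-stage probabilistic argument of Lemma~\ref{lem:flatten} with $d$ replaced by $K$ and $b$ by $h$, and to enlarge every union bound so that it also ranges over the $M$ branches $x$. The only change in the tail estimates is that $\ln(2K)$ becomes $\ln(2MK)=R\ln 2$.

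\textbf{Step 1: choose $S_2$.} Take $S_2=\diag(\xi_1,\ldots,\xi_K)$ with independent Rademacher entries. For fixed $x,p,q$, the entry $(U_xS_2H_K)_{pq}$ is a Rademacher sum whose coefficients have squared norm summing to $1/K$. The same Hoeffding argument then gives
\[
\Pr\{|(U_xS_2H_K)_{pq}|\ge\alpha\}\le 4\exp(-K\alpha^2/4).
\]
Set $\alpha=4\sqrt{\ln(2MK)/K}$. A union bound over the $MK^2$ triples $(x,p,q)$ bounds the failure probability by $4MK^2(2MK)^{-4}<1$. So a single $S_2$ makes every entry of every $U_xS_2H_K$ smaller than $\alpha$ in modulus.

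\textbf{Step 2: row coherence.} For each column slab $Z_{x,J}=U_xS_2H_KQ_J^\dagger$ we have $Z_{x,J}^\dagger Z_{x,J}=I_h$. Each of its rows $z_{x,\ell}^\dagger$ satisfies
\[
\norm{z_{x,\ell}}^2<h\alpha^2=16(h/K)\ln(2MK)=:\mu.
\]
Also $\mu\ge h/K$.

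\textbf{Step 3: choose $S_1$.} Take $S_1=\diag(\eta_1,\ldots,\eta_K)$ and write
\[
Q_IH_KS_1Z_{x,J}=\sum_\ell\eta_\ell a_\ell z_{x,\ell}^\dagger,\qquad a_\ell=Q_IH_Ke_\ell,\quad \norm{a_\ell}^2=h/K,\quad \sum_\ell a_\ell a_\ell^\dagger=I_h.
\]
The variance computation of Eqs.~\eqref{eq:left-variance}--\eqref{eq:right-variance} then gives $\sigma^2\le\mu$. Apply Lemma~\ref{lem:rademacher} with $t=16\sqrt{h/K}\ln(2MK)$. Each fixed triple $(x,I,J)$ fails with probability at most $2h(2MK)^{-8}$. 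A union bound over all $MD^2$ triples gives failure probability at most $2hMD^2(2MK)^{-8}\le 2MK^2(2MK)^{-8}<1$. Hence a common $S_1$ exists, and $S_1$ and $S_2$ are independent of $x$ by construction.

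\textbf{Conclusion and main obstacle.} The resulting bound is $16\ln 2\cdot\sqrt{h/K}\,\log(2MK)=C_{\mathrm{flat}}\sqrt{h/K}\,R$. Combined with the trivial bound $\norm{Q_IV_xQ_J^\dagger}\le1$, this gives $g_{\mathrm{mux}}$. The main point requiring care is that the $x$-dependence enters only through the union bounds. This works because the choice of $\alpha$ and $t$ in terms of $\ln(2MK)$ absorbs the extra factor $M$: the tail bounds decay like $(2MK)^{-4}$ and $(2MK)^{-8}$, which dominate the counts $MK^2$ and $MD^2$. No other new idea should be needed.
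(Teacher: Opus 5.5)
Your proposal is correct and follows the paper's proof essentially step for step: the same Rademacher choices of $S_2$ and then $S_1$, the same thresholds $\alpha=4\sqrt{\ln(2MK)/K}$ and $t=16\sqrt{h/K}\ln(2MK)$, and union bounds over the $MK^2$ entries and the $MD^2$ block triples. Your explicit estimate $hD^2=K^2/h\le K^2$ for the final union bound is a detail the paper leaves implicit, and the rows of $Z_{x,J}$ should also carry the index $J$ (e.g.\ $z_{x,J,\ell}$), but that omission is purely notational and harmless.
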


\begin{proof}
Choose \(S_2=\diag(\xi_1,\ldots,\xi_K)\), where the
\(\xi_\ell\) are independent Rademacher random variables.  For fixed
\(x,p,q\), the entry \((U_xS_2H_K)_{pq}\) is a complex Rademacher sum
whose squared coefficient norm is \(1/K\).  Set
\(\alpha=4\sqrt{\ln(2MK)/K}\).  Applying
Eq.~\eqref{eq:complex-entry-tail} and taking a union bound over all
\(MK^2\) entries gives
\begin{align}
  &\Pr\!\left\{
    \max_{x,p,q}|(U_xS_2H_K)_{pq}|\ge\alpha
  \right\}\le4MK^2(2MK)^{-4}<1.
  \label{eq:mux-entry-union-bound}
\end{align}
Fix a choice of \(S_2\) for which this event does not occur.

For each \(x,J\), define the \(K\times h\) isometry
\begin{equation}
  Z_{x,J}=U_xS_2H_KQ_J^\dagger,
  \label{eq:mux-Z}
\end{equation}
and write its \(\ell\)-th row as \(z_{x,J,\ell}^\dagger\).  The entry
bound above implies
\begin{equation}
  \norm{z_{x,J,\ell}}^2
  \le16\frac hK\ln(2MK)
  =:\mu_{\mathrm{mux}}.
  \label{eq:mux-row-coherence}
\end{equation}

Now choose \(S_1=\diag(\eta_1,\ldots,\eta_K)\), where \(\eta_1,\ldots,\eta_K\) are independent Rademacher random variables,
and set \(a_{I,\ell}:=Q_IH_Ke_\ell\).  For fixed \(x,I,J\),
\begin{equation}
  Q_IH_KS_1Z_{x,J}
  =\sum_{\ell=1}^K
    \eta_\ell a_{I,\ell}z_{x,J,\ell}^\dagger.
  \label{eq:mux-matrix-series}
\end{equation}
The two variance matrices of this Rademacher series satisfy
\begin{align}
  \sum_\ell
    \norm{z_{x,J,\ell}}^2
    a_{I,\ell}a_{I,\ell}^\dagger
    &\preceq\mu_{\mathrm{mux}}I_h,
    \label{eq:mux-left-variance}\\
  \sum_\ell
    \norm{a_{I,\ell}}^2
    z_{x,J,\ell}z_{x,J,\ell}^\dagger
    &=\frac hK I_h.
    \label{eq:mux-right-variance}
\end{align}
Lemma~\ref{lem:rademacher}, with
\(t=16\sqrt{h/K}\ln(2MK)\), followed by a union bound over the
\(MD^2\) triples \(x,I,J\), gives a total failure probability at most
\begin{equation}
  2hMD^2(2MK)^{-8}<1.
  \label{eq:mux-block-union-bound}
\end{equation}
Thus one choice of \(S_1\) satisfies
\[
  \norm{Q_IV_xQ_J^\dagger}
  \le16\sqrt{\frac hK}\ln(2MK)
  =C_{\mathrm{flat}}\sqrt{\frac hK}\,R
\]
simultaneously for all \(x,I,J\).  Combining this with the trivial
unitary-subblock bound \(\norm{Q_IV_xQ_J^\dagger}\le1\) proves the
lemma.
\end{proof}

Define
\begin{equation}
  \mathcal V:=\sum_{x=0}^{M-1}\proj{x}_{\A}\otimes V_x.
  \label{eq:mux-V}
\end{equation}
Since the Boolean phase oracles in Lemma~\ref{lem:mux-common-flatten} are common to all
branches,
\begin{equation}
  \mathcal U
  =(I_{\A}\otimes S_1H_K)\,
    \mathcal V\,
    (I_{\A}\otimes H_KS_2).
  \label{eq:mux-unflatten}
\end{equation}
The two Boolean phase oracles are implemented exactly using \(O(\sqrt K)\)
\(T\) gates and clean ancillas in total, and they occur only once,
outside the amplification sequence.

\subsection{Block encoding and synthesis}

Split the target register into a \(D\)-dimensional block label \(\B\)
and an \(h\)-dimensional register \(\Y\).  Introduce a
\(D\)-dimensional register \(\X\) and a one-qubit flag \(\f\), and define
\begin{equation}
  C_{x,IJ}:=\frac{Q_IV_xQ_J^\dagger}{g_{\mathrm{mux}}}.
  \label{eq:mux-C}
\end{equation}
Every \(C_{x,IJ}\) is a contraction.  Using the Julia--Halmos dilation
from Eq.~\eqref{eq:halmos-dilation}, define
\begin{multline}
  \SELECT_{\mathrm{mux}}
  :=\sum_{x=0}^{M-1}\sum_{I,J=0}^{D-1}
    \proj{x}_{\A}\otimes\proj{I}_{\X}\otimes\proj{J}_{\B}\\
    {}\otimes\operatorname{Hal}(C_{x,IJ})_{\f\Y}.
  \label{eq:mux-select}
\end{multline}
Let \(W_{\mathrm{mux}}\) apply \(H_D\) to \(\X\), followed by
\(\SELECT_{\mathrm{mux}}\), \(H_D\) on \(\B\), and a swap of \(\X\) and
\(\B\).  Let \(J_{0,\mathrm{mux}}\) append \(\X=0\) and \(\f=0\), while
leaving \(\A,\B,\Y\) as logical registers.  The basis-state calculation
of Lemma~\ref{lem:uniform-block} gives
\begin{equation}
  J_{0,\mathrm{mux}}^\dagger
  W_{\mathrm{mux}}J_{0,\mathrm{mux}}
  =\frac{\mathcal V}{\rho_h},
  \label{eq:mux-clean-block}
\end{equation}
where
\begin{equation}
  \rho_h:=Dg_{\mathrm{mux}}
  =\min\!\left\{
    D,C_{\mathrm{flat}}\sqrt D\,R
  \right\}.
  \label{eq:mux-rho}
\end{equation}
The normalization is independent of \(x\), and the address register is
not included in the signal projector.  Hence the same QSVT
sequence applies to arbitrary superpositions of addresses.

The \(\SELECT\) operator in Eq.~\eqref{eq:mux-select} acts on
\(N=m+2\log D+\log h+1\) input qubits, of which
\(q=\log h+1\) are target qubits.  Consequently,
\begin{equation}
  2^{(N+q)/2}=2\sqrt M\,K,
  \qquad
  4^q=4h^2.
  \label{eq:mux-compiler-factors}
\end{equation}
Define
\begin{equation}
  S_h:=\sqrt M\,K\sqrt\Lambda+h^2\Lambda.
  \label{eq:mux-Sh}
\end{equation}

\begin{theorem}
\label{thm:mux-parameterized}
For every power of two \(h\mid K\), the unitary
\(\mathcal U\) in Eq.~\eqref{eq:mux-definition} has a Clifford+\(T\)
implementation with error at most \(\eps\), \(T\)-count
\begin{equation}
  T_{\mathrm{mux}}
  =O\!\left(\rho_hS_h+\sqrt K\right),
  \label{eq:mux-parameterized-bound}
\end{equation}
and \(O(S_h+\sqrt K+\log K)\) clean ancillas.
\end{theorem}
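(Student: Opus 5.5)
The plan is to run the argument of Theorem~\ref{thm:main-restate} again, now in the multiplexed setting, with Lemma~\ref{lem:mux-common-flatten} replacing Lemma~\ref{lem:flatten}. By Eq.~\eqref{eq:mux-unflatten}, \(\mathcal U\) factors as \(\mathcal V\) sandwiched between two Walsh transforms (which are Clifford) and the Boolean phase oracles \(S_1,S_2\). Lemma~\ref{lem:boolean-phase} implements the oracles exactly with \(O(\sqrt K)\) \(T\) gates and \(O(\sqrt K)\) clean ancillas; this gives the additive \(\sqrt K\) terms. What remains is to implement \(\mathcal V\) with error \(\eps\) on the logical registers \(\A,\B,\Y\).

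\textbf{Block encoding.} First I would verify Eq.~\eqref{eq:mux-clean-block}. I will repeat the basis-state calculation of Lemma~\ref{lem:uniform-block} with \(\ket{x}_\A\) carried along as an inert control. Every operation in \(W_{\mathrm{mux}}\) is block diagonal in \(x\), so the clean compression is \(\sum_x \proj{x}\otimes V_x/(Dg_{\mathrm{mux}})\). This is \(c\,\mathcal V\) with \(c=1/\rho_h\), and \(\mathcal V\) is unitary.

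\textbf{Compiling the query.} Next I compile \(\SELECT_{\mathrm{mux}}\) using Lemma~\ref{lem:tan-controlled} with accuracy \(\delta\). By Eq.~\eqref{eq:mux-compiler-factors} this costs \(O(\sqrt M K\sqrt{\log h+\log(1/\delta)}+h^2(\log h+\log(1/\delta)))\) \(T\) gates and ancillas. The other gates in \(W_{\mathrm{mux}}\) are Clifford, so, as in Lemma~\ref{lem:compiled-base}, the compiled query \(\widetilde W_{\mathrm{mux}}\) satisfies \(\norm{J^\dagger\widetilde W_{\mathrm{mux}}J-c\mathcal V}\le\delta\). I will choose \(\delta=\eps^2/(CQ^2)\) with \(Q=\Theta(\rho_h)\). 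Since \(\rho_h\le D\le K\), we get \(\log(1/\delta)=O(\log K+\log(1/\eps))=O(\Lambda)\), so each query costs \(O(S_h)\). One edge case needs attention: if \(h=K\), then \(D=1\) and the construction degenerates. In that case I use Lemma~\ref{lem:tan-controlled} directly on \(\mathcal U\), which gives exactly the cost \(S_K\) with \(\rho_h=1\).

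\textbf{Amplification and accounting.} Lemma~\ref{lem:robust-amplification}, applied with \(V\to\mathcal V\), uses \(Q=\Theta(\rho_h)\) calls to \(\widetilde W_{\mathrm{mux}}\) and its inverse and gives isometry error \(O(Q\sqrt\delta)\le\eps/2\) once \(C\) is large. This step requires that the signal projector act only on \(\X\), \(\f\) and the workspace, and not on \(\A\). That holds because the normalization \(\rho_h\) does not depend on \(x\), so every singular value of the compressed block equals \(c\) exactly. The signal rotations are handled by Lemma~\ref{lem:signal-phase} with \(M'=O(\rho_h)\), \(\eta=\eps/2\), and \(a=O(\log D+S_h)\). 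This costs \(O(\rho_h(S_h+\log(\rho_h/\eps)))=O(\rho_h S_h)\), because \(\log(\rho_h/\eps)=O(\Lambda)\le S_h\). Summing the pieces gives \(T\)-count \(O(\rho_hS_h+\sqrt K)\). Ancillas are reused across queries, so the ancilla count is \(O(S_h)\), plus \(O(\log K)\) for the registers \(\X\) and \(\f\), plus \(O(\sqrt K)\) for the oracles.

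\textbf{Main obstacle.} The most delicate step is the error budget. I need \(\log(1/\delta)\) to stay within \(O(\Lambda)\), and it must not pick up a dependence on \(\log M\). This works because \(Q\le D\le K\) is independent of \(M\), so \(\delta\) depends only on \(K\) and \(\eps\). I also need to confirm that Tan's compiler error enters Lemma~\ref{lem:robust-amplification} only through the clean block, exactly as in the unmultiplexed case.
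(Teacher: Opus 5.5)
Your proposal is correct and follows the paper's proof essentially step for step. Both proofs compile $W_{\mathrm{mux}}$ to accuracy $\delta=\Theta(\eps^2/\rho_h^2)$, so that $\log(1/\delta)=O(\Lambda)$ via $\rho_h\le D\le K$; both amplify with Lemma~\ref{lem:robust-amplification} using $O(\rho_h)$ queries, handle the signal rotations by Lemma~\ref{lem:signal-phase}, split the error budget between the two (the paper uses $2\eps/3$ and $\eps/3$ rather than halves), and add the $O(\sqrt K)$ oracle cost outside the QSVT sequence. Your separate treatment of $h=K$ is harmless but unnecessary, since for $D=1$ the construction works verbatim with $\rho_h=1$ and $Q=1$.
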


\begin{proof}

Fix a sufficiently small constant \(c_0>0\), and compile
\(W_{\mathrm{mux}}\) to a Clifford+\(T\) unitary with error
\begin{equation}
  \delta:=\frac{c_0\eps^2}{\rho_h^2}.
  \label{eq:mux-delta}
\end{equation}
By Lemma~\ref{lem:tan-controlled} and
Eq.~\eqref{eq:mux-compiler-factors}, one compiled query has
\(T\)-count and clean-ancilla count \(O(S_h)\).  Indeed, since
\(\rho_h\le D\le K\),
\begin{align}
  \log(1/\delta)
  &= \log(1/c_0)+2\log\rho_h+2\log(1/\eps)
  \notag\\
  &= O\!\left(\log K+\log(1/\eps)\right).
\end{align}
This logarithmic factor is absorbed into \(\Lambda\).  Moreover, the same compression
argument as in Lemma~\ref{lem:compiled-base} shows that the clean block
\(A\) of the compiled query satisfies
\[
  \left\|A-\frac{\mathcal V}{\rho_h}\right\|\le\delta.
\]


Because \(W_{\mathrm{mux}}\) and \(\mathcal V\) are unitary,
Eq.~\eqref{eq:mux-clean-block} implies
\[
  \frac{1}{\rho_h}
  =
  \left\|\frac{\mathcal V}{\rho_h}\right\|
  \le 1.
\]
Thus \(1\le\rho_h\le K\).  Set \(c:=1/\rho_h\).  For sufficiently
small \(c_0\), the choice in Eq.~\eqref{eq:mux-delta} satisfies
\[
  \delta<c/2,
  \qquad
  Q^2\delta=O(c_0\eps^2)=O(1),
\]
where \(Q=\Theta(\rho_h)=\Theta(1/c)\).  Lemma~
\ref{lem:robust-amplification} therefore givesan ideal exact-phase
QSVT sequence using \(O(\rho_h)\) calls to the compiled query and its
exact inverse, with clean-input error
\[
  O(Q\sqrt{\delta})
  =O(\sqrt{c_0}\,\eps).
\]
Choosing \(c_0\) sufficiently small makes this error at most
\(2\eps/3\).  The compiled queries contribute
\(O(\rho_hS_h)\) \(T\) gates, while their \(O(S_h)\) clean workspace
is reused throughout the sequence.

By Lemma~\ref{lem:signal-phase}, the \(O(\rho_h)\) signal rotations can
be synthesized to total error at most \(\eps/3\), using
\(O(\rho_hS_h)\) \(T\) gates and \(O(S_h)\) reusable clean ancillas.
Indeed, the signal predicate involves \(O(S_h)\) qubits, and
\[
  \log(\rho_h/\eps)
  =
  O\!\left(\log K+\log(1/\eps)\right)
  =
  O(\Lambda)
  =
  O(S_h).
\]
By the triangle inequality, the resulting circuit implements
\(\mathcal V\) with error at most \(\eps\).

Finally, Eq.~\eqref{eq:mux-unflatten} adds the two Boolean phase oracles outside the QSVT sequence.  Their total cost is
\(O(\sqrt K)\) \(T\) gates and clean ancillas.  Together with the
\(O(\log K)\) clean block-label and flag qubits, this gives the stated
\(T\)-count and clean-ancilla bounds.
\end{proof}

\subsection{Optimizing the block size}

Let
\begin{equation}
  D_\star
  =K^{1/2}M^{-1/4}\Lambda^{1/4}\text{~~and~~} h_\star  =M^{1/4}K^{1/2}\Lambda^{-1/4}.
  \label{eq:mux-optimum}
\end{equation}

The next corollary shows the $T$-count of an $(m,k)$-UCU.
\begin{corollary}
\label{cor:mux-optimized}
Suppose \(1\le D_\star\le K\).  Then \(\mathcal U\) has a
Clifford+\(T\) implementation with error at most \(\eps\), \(T\)-count
\begin{equation}
  O\!\left(
    R\,M^{3/8}K^{5/4}\Lambda^{5/8}
  \right),
  \label{eq:mux-optimized-explicit}
\end{equation}
and \(O(\sqrt M\,K\sqrt\Lambda)\) clean ancillas.
\end{corollary}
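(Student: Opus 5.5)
The plan is to specialize Theorem~\ref{thm:mux-parameterized} to a block size $h=\Theta(h_\star)$, equivalently $D=K/h=\Theta(D_\star)$. There are three steps: bound $\rho_h$ by its second branch, choose $h$ as a power of two that is an admissible divisor of $K$, and balance the two terms of $\rho_hS_h$.

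First, I use $\rho_h\le C_{\mathrm{flat}}\sqrt D\,R$. This turns the $T$-count of Theorem~\ref{thm:mux-parameterized} into
\[
O\!\left(R\sqrt D\,\bigl(\sqrt M\,K\sqrt\Lambda+h^2\Lambda\bigr)+\sqrt K\right),
\qquad h=K/D.
\]
Substituting $h=K/D$, the two terms are $R\,M^{1/2}K\Lambda^{1/2}D^{1/2}$ and $R\,K^2\Lambda D^{-3/2}$. They balance when $D^2=K\Lambda^{1/2}M^{-1/2}$, i.e.\ at $D=D_\star=K^{1/2}M^{-1/4}\Lambda^{1/4}$. At that point the first term is
\[
R\,M^{1/2}K\Lambda^{1/2}\cdot K^{1/4}M^{-1/8}\Lambda^{1/8}=R\,M^{3/8}K^{5/4}\Lambda^{5/8}.
\]
The second term takes the same value at $D_\star$. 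The additive $\sqrt K$ is dominated by this, since $R,M,\Lambda\ge1$.

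Second, I must make $D$ admissible: a power of two with $1\le D\le K$, so that $h=K/D$ is a power of two dividing $K$. The hypothesis $1\le D_\star\le K$ is what guarantees this. I take $D$ to be the power of two with $D_\star\le D<2D_\star$, capped at $K$. If the cap binds, then $D_\star>K/2$, so $D=K$ is still within a factor $2$ of $D_\star$. Changing $D$ by a constant factor changes each of the two terms by at most a constant factor, because both are monomials in $D$. Hence the bound above survives up to constants.

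Third, the ancilla count from Theorem~\ref{thm:mux-parameterized} is $O(S_h+\sqrt K+\log K)$. At the balance point I need $h^2\Lambda=O(\sqrt M\,K\sqrt\Lambda)$. This holds because $h_\star^2\Lambda=M^{1/2}K\Lambda^{1/2}$ exactly, and $h$ is within a factor $2$ of $h_\star$. The terms $\sqrt K$ and $\log K$ are at most $K\le\sqrt M\,K\sqrt\Lambda$. This gives $O(\sqrt M\,K\sqrt\Lambda)$ clean ancillas.

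The only delicate point is the rounding in the second step, in particular checking that the cap at $D\le K$ and the floor at $D\ge1$ (where the hypothesis $D_\star\ge1$ is used) still keep $D=\Theta(D_\star)$. The rest is direct substitution.
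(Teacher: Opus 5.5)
Your proposal is correct and follows the paper's proof essentially step for step. Like the paper, you bound $\rho_h\le C_{\mathrm{flat}}\sqrt{D}\,R$, round $D_\star$ up to a power of two, substitute into the two monomial terms, and use $h\le h_\star$ for the ancilla bound. The one cosmetic difference is your cap at $K$, which never binds: $K$ is itself a power of two and $D_\star\le K$, so the smallest power of two that is at least $D_\star$ is already at most $K$.
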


\begin{proof}
Since \(\rho_h\le C_{\mathrm{flat}}\sqrt D\,R\) for every \(D\), and
\(h=K/D\), Theorem~\ref{thm:mux-parameterized} implies that the $T$-count of an $(m,k)$-UCU is
\begin{align}
O\!\Bigl(&
    R\sqrt M\,K\sqrt\Lambda\,D^{1/2}
    +RK^2\Lambda D^{-3/2}+\sqrt K
  \Bigr).
  \label{eq:mux-before-optimization}
\end{align}

Choose \(D\) to be the smallest power of two not smaller than
\(D_\star\), and set \(h=K/D\).  Because \(K\) is a power of two and
\(D_\star\le K\), this choice satisfies \(D\mid K\) and differs from
\(D_\star\) by less than a factor of two.  Substitution into
Eq.~\eqref{eq:mux-before-optimization} gives
Eq.~\eqref{eq:mux-optimized-explicit}; the additive \(O(\sqrt K)\)
term is lower order.  Moreover,
\[
  h^2\Lambda
  =O(h_\star^2\Lambda)
  =O(\sqrt M\,K\sqrt\Lambda),
\]
so the clean-ancilla bound follows from
Theorem~\ref{thm:mux-parameterized}.
\end{proof}

The restriction in Corollary~\ref{cor:mux-optimized} does not affect
the following polynomial scalings.  If \(M\ge K^2\),
Eq.~\eqref{eq:direct-mux-Tan} gives
\(\widetilde O(\sqrt M\,K)\).  If \(M\le K^2\), then
\(D_\star\ge1\).  When \(D_\star\le K\),
Corollary~\ref{cor:mux-optimized} applies.  Otherwise \(D_\star>K\),
equivalently \(\Lambda>MK^2\); choosing \(D=K\) in
Eq.~\eqref{eq:mux-before-optimization} gives $T$-count
\[
  O\!\left(
    R\sqrt M\,K^{3/2}\sqrt\Lambda
    +R\sqrt K\,\Lambda
  \right)
  =\widetilde O(M^{3/8}K^{5/4}),
\]
where the last estimate uses \(MK^2<\Lambda\).

Minimizing the parameterized bound over the admissible block sizes and
comparing it with Eq.~\eqref{eq:direct-mux-Tan}, the polynomial
dependence, after suppressing factors polylogarithmic in
\(M,K,1/\eps\), is
\begin{equation}
  \begin{cases}
    \widetilde O(M^{3/8}K^{5/4}),
      & M\le K^2,\\[1mm]
    \widetilde O(\sqrt M\,K),
      & M\ge K^2.
  \end{cases}
  \label{eq:mux-piecewise}
\end{equation}
For \(M<K^2\), the first branch improves the
\(\widetilde O(K^2)\) target-side term in the direct compiler; the two
bounds agree at \(M=K^2\).

\section{Conclusion}
\label{sec:conclusion}

We have shown that an arbitrary effectively specified unitary
\(U\in\U(2^n)\) can be synthesized over Clifford+\(T\), in the regime
\(n+\log(1/\eps)\le 2^n\), with \(T\)-count
\begin{equation}
  O\!\left(n2^{5n/4}(n+\log(1/\eps))^{5/8}\right).
\end{equation}
This bound is \(o(2^{4n/3}(n+\log(1/\eps))^{2/3})\) uniformly throughout
\(n+\log(1/\eps)\le 2^n\), and therefore improves Tan's previous worst-case bound over
the entire regime.

The key idea is to treat the target unitary as a single block-encoded
object rather than decomposing it into a long product of simpler
unitaries.  Boolean sign diagonals and Walsh transforms simultaneously
flatten all blocks of the target.  Normalizing and dilating these blocks
then allows them to be organized into one jointly compiled \(\SELECT\),
whose block is proportional to the full unitary.  Quantum singular value transformation removes the resulting
normalization by mapping the common encoded singular value to one.  Balancing the
normalization overhead against the cost of compiling the selected
dilations yields the exponent \(5/4\).

The known lower bound at constant accuracy remains
\(\Omega(2^n)\)~\cite{GossetKothariWu2026}, leaving a multiplicative gap of
\(\widetilde O(2^{n/4})\).  Determining whether arbitrary unitaries can
be synthesized with \(\widetilde O(2^n)\) \(T\) gates, or establishing a
stronger lower bound, remains the central open problem.

\section*{Acknowledgments}

Wei Zi was supported by the Guangdong Provincial Quantum Science Strategic
Initiative under Grant No.~GDZX2503001. 

The authors acknowledge the use of OpenAI's GPT-5.6 Sol for author-directed exploratory discussions,
literature searches, and assistance with manuscript preparation.
All mathematical claims and the final proofs were independently verified, simplified, and rewritten by the authors, who take full responsibility for the correctness and integrity of the mathematical results.
\setlength{\emergencystretch}{3em}
\bibliography{ref}

\appendix

\section{Proof of Lemma~\ref{lem:exact-response}}
\label{append:proof-restate}
{
For a complex polynomial \(p(x)=\sum_{j=0}^k a_jx^j\in\C[x]\), define \(p^*(x):=\sum_{j=0}^k\overline{a_j}x^j\) and \(\operatorname{Re}p(x):=\sum_{j=0}^k\operatorname{Re}(a_j)x^j\).  We say that \(p\) has parity \(r\in\{0,1\}\) if \(p(-x)=(-1)^r p(x)\); parity \(0\) means even and parity \(1\) means odd.
\begin{lemma}[\cite{Gilyen2019}]\label{lem:realpoly-to-complexpoly}
    Let $p_0(x)\in\mathbb{R}[x]$ be a degree-$Q$ polynomial for some $Q\ge 1$, such that $p_0(x)$ has parity-$(Q\mod 2)$ and $|p_0(x)|\le 1$ for all $x\in[-1,1]$. Then there exists a degree-$Q$ polynomial $P(x)\in\C[x]$ satisfying $\operatorname{Re} P=p_0$, and 
    \begin{enumerate}
        \item $P$ has parity-$(Q\mod 2)$,
        \item $\forall x\in[-1,1]$: $|P(x)|\le 1$,
        \item $\forall x\in (-\infty,-1]\cup [1,\infty)$: $|P(x)|\ge 1$,
        \item if $Q$ is even, then $\forall x\in\mathbb{R}$: $P(\ii x)P^*(\ii x)\ge 1$.
    \end{enumerate}
\end{lemma}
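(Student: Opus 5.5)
The plan is to build \(P\) as \(P=p_0+\ii A\) with \(A\in\mathbb R[x]\) a real polynomial, chosen so that
\[
  |P(x)|^2=p_0(x)^2+A(x)^2=1-(1-x^2)B(x)^2
\]
for some real polynomial \(B\). Then \(\operatorname{Re}P=p_0\) holds automatically. Items 2 and 3 follow immediately on the real line:
\begin{itemize}
  \item for \(x\in[-1,1]\) we have \((1-x^2)B^2\ge0\), so \(|P|^2\le1\);
  \item for \(|x|\ge1\) we have \((1-x^2)B^2\le0\), so \(|P|^2\ge1\).
\end{itemize}
Thus everything reduces to decomposing the polynomial \(f:=1-p_0^2\) as \(f=A^2+(1-x^2)B^2\) with parity and degree constraints.

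\textbf{Step 1: basic properties of \(f\).} The polynomial \(f=1-p_0^2\) is real and even, since \(p_0^2\) is even whatever the parity of \(p_0\). It has degree at most \(2Q\) and is nonnegative on \([-1,1]\) because \(|p_0|\le1\) there.

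\textbf{Step 2: Lukács-type decomposition.} By the Lukács (Markov--Lukács) theorem, every real polynomial of degree at most \(2Q\) that is nonnegative on \([-1,1]\) can be written as
\[
  f=A^2+(1-x^2)B^2,\qquad \deg A\le Q,\quad \deg B\le Q-1.
\]
Concretely, I would obtain this by factoring \(f\) over \(\C\):
\begin{itemize}
  \item real roots outside \([-1,1]\) and complex-conjugate root pairs are grouped into factors of the form \(|h(x)|^2\), and hence sums of two real squares;
  \item roots inside \((-1,1)\) occur with even multiplicity, since \(f\ge0\) there;
  \item roots at \(\pm1\) and the sign of the leading coefficient (which is \(\le0\) when \(\deg f=2Q\)) are absorbed using the identity \((x^2-a^2)\ldots\) together with the factor \(1-x^2\).
\end{itemize}
One then multiplies such two-term representations using the Brahmagupta--Fibonacci identity, adapted to the weight \(1-x^2\).

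\textbf{Step 3: parity of \(A\) (the main obstacle).} I need \(A\) to have parity \(Q\bmod2\), so that \(P=p_0+\ii A\) has a definite parity. Naive symmetrization \(A\mapsto A(\pm x)\) does not work: it produces \(A_e^2+A_o^2\), a sum of two squares rather than a single square. Instead, I would exploit that \(f\) is even, so its root set is invariant under \(z\mapsto-z\) as well as under conjugation. I would do the factorization in Step 2 in \(\pm\)-symmetric orbits, choosing from each orbit a factor \(h\) with \(h(-x)=\pm h(x)\) up to the appropriate sign. Equivalently, I would pass to \(y=x^2\) and work with the polynomial \(\tilde f(y)=f(\sqrt y)\), which is nonnegative on \([0,1]\). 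Lukács on \([0,1]\) gives \(\tilde f=\tilde A^2+y(1-y)\tilde B^2\) or \(\tilde f=y\tilde A^2+(1-y)\tilde B^2\), according to the parity of \(Q\). Substituting back \(y=x^2\) yields \(A\) of parity \(Q\bmod 2\) and \(B\) of the opposite parity. The remaining degree bookkeeping should be routine, because the degrees in Lukács' theorem are sharp.

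\textbf{Step 4: item 4 (\(Q\) even).} If \(Q\) is even, I would check it directly on the imaginary axis. For \(x\in\mathbb R\), \(P(\ii x)P^*(\ii x)=p_0(\ii x)^2+A(\ii x)^2\) with both \(p_0\) and \(A\) even. Since \(P\) is even, this equals \(|P|^2\) evaluated at \(\ii x\) via the identity \(1-(1-(\ii x)^2)B(\ii x)^2\), which is \(1-(1+x^2)B(\ii x)^2\). Because \(B\) is odd here, \(B(\ii x)^2=-|B(\ii x)|^2\), so this quantity is \(\ge1\). I expect the sign conventions in this last step, and the parity-preserving factorization in Step 3, to need the most care; the latter is the real obstacle.
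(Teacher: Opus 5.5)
Your proof is correct. The paper gives no proof of this lemma and only cites \cite{Gilyen2019}. Your argument follows the same basic strategy as that source. You complete $p_0$ to $P=p_0+\ii A$ using a parity-respecting decomposition $1-p_0^2=A^2+(1-x^2)B^2$ with $\deg A\le Q$ and $\deg B\le Q-1$. Everything then follows from the polynomial identity $PP^*=1-(1-x^2)B^2$.

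The differences lie in how the two ingredients are supplied. In \cite{Gilyen2019} the decomposition is built by hand. The even polynomial $1-p_0^2$ is factored over its roots in the variable $x^2$ and written as $|W(x)|^2$, where $W$ is a product of factors linear in $x$ and $\sqrt{1-x^2}$; this produces the parity structure automatically. Items 2--4 are then read off from their characterization of the polynomials $P$ admitting a complementary $Q$ with $PP^*+(1-x^2)QQ^*=1$.

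You instead delegate the parity issue to Luk\'acs' theorem on $[0,1]$ after substituting $y=x^2$. This neatly sidesteps the symmetrization obstruction you correctly identified. You also verify items 2--4 directly. Your item-4 computation is exactly the mechanism behind that condition: $B$ is odd with real coefficients, so $B(\ii x)^2\le 0$. Your route is shorter once Luk\'acs' theorem is taken as a black box. The factorization in \cite{Gilyen2019} amounts to proving the needed special case of that theorem directly.

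Three small points to tighten:
\begin{enumerate}
\item Luk\'acs' theorem in its usual form selects the representation by the exact degree. You should record that $\deg(1-p_0^2)=2Q$ exactly, since its leading coefficient is $-\mathrm{lc}(p_0)^2\neq 0$ when $Q\ge 1$. Hence $\deg\tilde f=Q$, and the form you obtain is the one dictated by the parity of $Q$. Your Step~1 only says ``at most $2Q$''.
\item $\deg P=Q$ follows from $\operatorname{Re}P=p_0$ together with $\deg A\le Q$.
\item Once Step~3 is in place, Step~2 is redundant and can be dropped.
\end{enumerate}
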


\begin{lemma}[Restatement of Lemma \ref{lem:exact-response}]
\label{lem:exact-response-restate}
For every \(0<c\le1\), there is an odd integer $Q=\Theta(1/c)$ and a degree-$Q$ polynomial $P(x)\in\C[x]$
such that $P(x)$ satisfies the four properties in Lemma \ref{lem:realpoly-to-complexpoly} and
\begin{equation}
  P(c)=1.
\end{equation}
\end{lemma}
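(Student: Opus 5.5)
The plan is to build a real odd polynomial $p_0$ of odd degree $Q=\Theta(1/c)$ with $|p_0|\le1$ on $[-1,1]$ and $p_0(c)=1$. Lemma~\ref{lem:realpoly-to-complexpoly} then supplies a complex $P$ with $\operatorname{Re}P=p_0$ and properties 1--4. We also need $P(c)=1$. Since $|P(c)|\le1$ and $\operatorname{Re}P(c)=p_0(c)=1$, the imaginary part at $c$ must vanish, so $P(c)=1$ follows automatically once $p_0(c)=1$ holds exactly. Property 4 is vacuous because $Q$ is odd. So everything reduces to the real construction.

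For the real polynomial, I would use the Chebyshev polynomial $T_Q$ with $Q$ odd, which is odd and bounded by one on $[-1,1]$. Its maximum value $1$ is attained at $\cos(2\pi j/Q)$, so the task is to place a maximum at $c$ by rescaling the argument.

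Case $c=1$: take $Q=1$ and $p_0(x)=x$.

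Case $c<1$: choose the odd integer $Q\ge1$ with $Q\approx \pi/(2\arcsin c)$. Let $x_0=\cos(\pi/Q)$, which is a point where $T_Q(x_0)=-1$. An odd function cannot be $+1$ at a point near $0$ via the $\cos(2\pi j/Q)$ maxima with small scaling, so I would instead work with the representation $T_Q(\sin\theta)=(-1)^{(Q-1)/2}\sin(Q\theta)$, valid for odd $Q$.

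Set $p_0(x)=(-1)^{(Q-1)/2}T_Q(x/s)$ with $s=\sin(\theta_0)^{-1}c$, where $\theta_0$ is chosen so that $Q\theta_0=\pi/2$, i.e. $\theta_0=\pi/(2Q)$. Then $p_0(c)=\sin(Q\theta_0)=1$. For the bound on $[-1,1]$ we need $s\ge1$, so that $|x/s|\le1$ and $|T_Q(x/s)|\le1$. This requires $\sin(\pi/(2Q))\le c$, i.e. $Q\ge \pi/(2\arcsin c)$. Taking $Q$ to be the smallest odd integer at least $\pi/(2\arcsin c)$ gives $Q=\Theta(1/c)$, since $\arcsin c\in[c,\pi c/2]$. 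The scaled polynomial $x\mapsto T_Q(x/s)$ is still odd, real, of degree exactly $Q$, and bounded by one on $[-1,1]$.

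The only delicate point is confirming that Lemma~\ref{lem:realpoly-to-complexpoly} really yields $P(c)=1$ rather than just $\operatorname{Re}P(c)=1$. The modulus argument above settles it, because $c\in[-1,1]$ so property 2 applies there.
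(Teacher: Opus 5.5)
Your proposal is correct and matches the paper's proof essentially step for step: the paper also takes $Q$ to be the smallest odd integer with $\sin(\pi/(2Q))\le c$, sets $p_0(x)=(-1)^{(Q-1)/2}T_Q(\beta x)$ with $\beta=\sin(\pi/(2Q))/c$ (your $1/s$), and deduces $P(c)=1$ from $\operatorname{Re}P(c)=1$ together with $|P(c)|\le 1$. Your separate treatment of $c=1$ is harmless but unnecessary, since the general construction already gives $Q=1$ and $p_0(x)=x$ there.
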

\begin{proof}
Let \(Q\) be the smallest positive odd integer satisfying
\begin{equation}
  \sin\!\left(\frac{\pi}{2Q}\right)\le c.
  \label{eq:Q-choice}
\end{equation}
Using $2x/\pi \le \sin x \le x$ for $x\in [0,\pi/2]$, together with the minimality of $Q$, we obtain
If \(c=1\), then \(Q=1\) and the following bounds are immediate.  Otherwise \(Q\ge3\), so minimality may be applied to the preceding positive odd integer \(Q-2\).
\begin{equation}
  \frac1c\le Q\le\frac{\pi}{2c}+2.
  \label{eq:Q-explicit-bounds}
\end{equation}
Hence \(Q=\Theta(1/c)\).

Set $\beta:=\sin(\pi/(2Q))/c\le1$,
and write $Q=2m+1$. Define a real polynomial
\begin{equation}
  p_0(x):=\sin\!\bigl(Q\arcsin(\beta x)\bigr)=(-1)^mT_Q(\beta x),
  \label{eq:p0-definition}
\end{equation}
where $T_Q$ is the degree-$Q$ Chebyshev polynomial of the first kind. Therefore, $p_0$ has parity-$(Q\mod 2)$.
Moreover, for every $x\in[-1,1]$, we have $\beta x\in[-1,1]$. Then
\begin{equation}
  |p_0(x)|\le1.
  \label{eq:p0-bounded}
\end{equation}
At the distinguished point,
\begin{align}
  p_0(c)
   &=\sin\!\left(
      Q\arcsin\!\left(\sin\frac{\pi}{2Q}\right)
     \right)
    =1.
  \label{eq:p0-at-c}
\end{align}
By Lemma~\ref{lem:realpoly-to-complexpoly}, there exists a degree-\(Q\) polynomial \(P(x)\in\C[x]\) such that \(P\) satisfies the four properties in that lemma and
\begin{equation}
  \operatorname{Re}P(x)=p_0(x).
  \label{eq:qsp-completion-real-part}
\end{equation}
Since \(\operatorname{Re}P(c)=p_0(c)=1\) and
\(|P(c)|\le1\), we have
\begin{equation}
  P(c)=1.
  \label{eq:P-at-c-exact}
\end{equation}
\end{proof}
}

\end{document}